\def\doi{9(1:08)2013}
\theoremstyle{plain}\newtheorem{theorem}[thm]{Theorem}
\theoremstyle{plain}\newtheorem{proposition}[thm]{Proposition}
\theoremstyle{plain}\newtheorem{lemma}[thm]{Lemma}
\theoremstyle{plain}\newtheorem{corollary}[thm]{Corollary}
\theoremstyle{definition}\newtheorem{example}[thm]{Example}
\begin{document}

\title[On the Complexity of Equivalence and Minimisation for
  $\mathbb{Q}$-Weighted Automata] {On the Complexity of Equivalence
  and Minimisation for $\mathbb{Q}$-Weighted Automata\rsuper*}

\author[S. Kiefer]{Stefan Kiefer\rsuper a}
\address{{\lsuper{a,c,d,e}}University of Oxford, UK}
\email{\{stekie, joel, Bjoern.Wachter, jbw\}@cs.ox.ac.uk}

\author[A.S.~Murawski]{Andrzej S.~Murawski\rsuper b} 
\address{{\lsuper b}Department of Computer Science, University 
of Warwick, UK}
\email{A.Murawski@warwick.ac.uk}

\author[J.~Ouaknine]{Jo\"el Ouaknine\rsuper c}   
\address{\vskip-6 pt}

\author[B. Wachter]{Bj\"orn Wachter\rsuper d} 
\address{\vskip-6 pt}

\author[J.~Worrell]{James Worrell\rsuper e}   
\address{\vskip-6 pt}


\keywords{weighted automata, equivalence checking, polynomial identity
  testing, minimisation} 

\ACMCCS{[{\bf Theory of computation}]: Design and analysis of
  algorithms---Approximation algorithms analysis---Numeric
  approximation algorithms \& Semantics and reasoning---Program
  reasoning---Program verification}

\subjclass{F.2.1, F.3.1}
\titlecomment{{\lsuper*}This is a full and improved version of the FoSSaCS'12
  paper with the same title.  An algorithm from the same authors'
  CAV'11 paper~\cite{Cav11} was incorporated in
  Section~\ref{sub:cav-algorithm} and new algorithms for minimisation were added in Section~\ref{sub:minimisation}.
  Section~\ref{sub:exp-equiv} is also new.}



\renewcommand{\P}[1]{{\cal P}\left(#1\right)}
\newcommand{\Q}{\mathbb{Q}}
\newcommand{\F}{\mathbb{F}}
\newcommand{\R}{\mathbb{R}}
\newcommand{\N}{\mathbb{N}}
\newcommand{\Z}{\mathbb{Z}}
\newcommand{\norm}[1]{\lVert#1\rVert}
\newcommand{\A}{\mathcal{A}}
\newcommand{\B}{\mathcal{B}}
\newcommand{\C}{\mathcal{C}}
\newcommand{\V}[2]{{}_{#1}\!V_{#2}}
\newcommand{\ns}[1]{n^{(#1)}}
\newcommand{\Ms}[1]{M^{(#1)}}
\newcommand{\ind}{\mbox{}\hspace{7mm}}
\newcommand{\indd}{\ind\ind}
\newcommand{\inddd}{\indd\ind}
\newcommand{\indddd}{\inddd\ind}
\newcommand\apex[1]{\textsc{apex}}
\newcommand{\va}{\boldsymbol{a}}
\newcommand{\valpha}{\boldsymbol{\alpha}}
\newcommand{\alphas}[1]{\valpha^{(#1)}}
\newcommand{\Gammas}[1]{\Gamma^{(#1)}}
\newcommand{\AF}{\overrightarrow{\mathcal{A}}}
\newcommand{\AB}{\overleftarrow{\mathcal{A}}}
\newcommand{\AFB}{\overleftarrow{\overrightarrow{\mathcal{A}}}}
\newcommand{\ABF}{\overrightarrow{\overleftarrow{\mathcal{A}}}}
\newcommand{\FBF}{\overrightarrow{\overleftarrow{F}}}

\newcommand{\BV}{\mathsf{B}}
\newcommand{\Bh}{\widehat{B}}
\newcommand{\etaBF}{\overleftarrow{\overrightarrow{\eta}}}
\newcommand{\FV}{\mathsf{F}}
\newcommand{\Ft}{\widetilde{F}}
\newcommand{\Fh}{\widehat{F}}
\newcommand{\FF}{\overrightarrow{F}}
\newcommand{\BB}{\overleftarrow{B}}
\newcommand{\nF}{\overrightarrow{n}}
\newcommand{\nB}{\overleftarrow{n}}
\newcommand{\nBF}{\overrightarrow{\overleftarrow{n}}}
\newcommand{\MF}{\overrightarrow{M}}
\newcommand{\MB}{\overleftarrow{M}}
\newcommand{\MBF}{\overrightarrow{\overleftarrow{M}}}

\newcommand{\vb}{\boldsymbol{b}}
\newcommand{\bs}[1]{\vb^{(#1)}}
\newcommand{\vc}{\boldsymbol{c}}
\newcommand{\ve}{\boldsymbol{e}}
\newcommand{\es}[1]{\ve^{(#1)}}
\newcommand{\veta}{\boldsymbol{\eta}}
\newcommand{\etas}[1]{\veta^{(#1)}}
\newcommand{\vh}{\boldsymbol{h}}
\newcommand{\hs}[1]{\vh^{(#1)}}
\newcommand{\vl}{\boldsymbol{\ell}}
\newcommand{\ls}[1]{\ell^{(#1)}}
\newcommand{\ps}[1]{p^{(#1)}}
\newcommand{\vu}{\boldsymbol{u}}
\newcommand{\us}[1]{\vu^{(#1)}}
\newcommand{\vv}{\boldsymbol{v}}
\newcommand{\vs}[1]{\vv^{(#1)}}
\newcommand{\vr}{\boldsymbol{r}}
\newcommand{\vx}{\boldsymbol{x}}
\newcommand{\vz}{\boldsymbol{z}}
\newcommand{\vw}{\boldsymbol{w}}
\newcommand{\vzero}{\boldsymbol{0}}
\newcommand{\parikh}[1]{#1{\downarrow}}
\newcommand{\Span}{\mathrm{span}}
\newcommand{\rank}{\ensuremath{\textup{rank}}}


\begin{abstract}
\noindent
This paper is concerned with the computational complexity of
equivalence and minimisation for automata with transition weights in
the ring $\mathbb{Q}$ of rational numbers.  We use polynomial identity
testing and the Isolation Lemma to obtain complexity bounds, focussing
on the class \textbf{NC} of problems within \textbf{P} solvable in
polylogarithmic parallel time.  For finite $\mathbb{Q}$-weighted
automata, we give a randomised \textbf{NC} procedure that either
outputs that two automata are equivalent or returns a word on which
they differ.  We also give an \textbf{NC} procedure for deciding
whether a given automaton is minimal, as well as a randomised
\textbf{NC} procedure that minimises an automaton.  We consider
probabilistic automata with rewards, similar to Markov Decision
Processes.  For these automata we consider two notions of equivalence:
expectation equivalence and distribution equivalence.  The former
requires that two automata have the same expected reward on each input
word, while the latter requires that each input word induce the same
distribution on rewards in each automaton.  For both notions we give
algorithms for deciding equivalence by reduction to equivalence of
$\mathbb{Q}$-weighted automata.  Finally we show that the equivalence
problem for $\mathbb{Q}$-weighted visibly pushdown automata is
logspace equivalent to the polynomial identity testing problem.
\end{abstract}

\maketitle

\section{Introduction}
Probabilistic and weighted automata were introduced in the 1960s, with
many fundamental results established in the papers of
Schutzenberger~\cite{Schutzenberger} and Rabin~\cite{Rab63}.  Nowadays
probabilistic automata are widely used in automated verification,
natural-language processing, and machine learning.  In this paper we
consider weighted automata over the ring $(\Q,+,\cdot,0,1)$, which
generalise probabilistic automata.  Note that we restrict to rational
transition weights to permit effective representation of
automata.

Two $\Q$-weighted automata are said to be equivalent if they assign
the same weight to any given word.  It has been shown by
Schutzenberger~\cite{Schutzenberger} and later by Tzeng~\cite{Tzeng}
that equivalence for $\mathbb{Q}$-weighted automata is decidable in
polynomial time.  By contrast, the natural analog of language
inclusion, that one automaton accepts each word with weight at least
as great as another automaton, is undecidable~\cite{CL89}.  Let us
emphasize that we consider the standard ring structure on $\Q$.  For
example, for weighted automata over the max-plus semiring on $\Q$,
equivalence is undecidable~\cite{ABK11,Krob94}.

In this paper we show that the equivalence problem for
$\mathbb{Q}$-weighted automata, and various extensions thereof,
can be efficiently solved by techniques rooted in \emph{polynomial
  identity testing}.  We focus on establishing bounds involving
complexity classes within the class \textbf{P} of polynomial-time
solvable problems.  In particular, we consider the class \textbf{NC} of
problems solvable in polylogarithmic parallel time with polynomially
many processors~\cite{GHR95} (see Section~\ref{sec:prelim} for
background on complexity theory).

It has long been known that equivalence for $\mathbb{Q}$-weighted
automata can be solved in polynomial time~\cite{Schutzenberger,Tzeng}.
There is moreover an \textbf{NC} algorithm for solving
equivalence~\cite{Tzeng96}.  Our first contribution, in
Section~\ref{sec:rand}, is a randomised \textbf{NC} algorithm for
deciding equivalence, based on polynomial identity testing.  The
advantage of using randomisation in this context is that our algorithm
has much lower processor complexity than~\cite{Tzeng96}.  The latter
performs quadratically more work than the classical sequential
procedure.  On the other hand, our randomised algorithm compared well
with the classical sequential algorithm of~\cite{Schutzenberger,Tzeng}
on a collection of benchmarks~\cite{Cav11}.

We also show that our algorithm can be used not just to decide
equivalence but also to generate counterexamples in case of
inequivalence.  However the counterexample generation is essentially
sequential.  We address this deficiency by giving a second randomised
\textbf{NC} algorithm to decide equivalence of automata and output
counterexamples in case of inequivalence.  The algorithm is based on
the Isolation Lemma, a classical technique in randomised algorithms
that has previously been used, e.g., to derive randomised \textbf{NC}
algorithms for matching in graphs~\cite{MVV87}.  Whether there is a
deterministic \textbf{NC} algorithm that outputs counterexamples in
case of inequivalence remains open.

A $\mathbb{Q}$-weighted automaton is \emph{minimal} if no
equivalent automaton has fewer states.  Minimal automata are unique up
to change of basis.  In Section~\ref{sub:minimisation} we give an
\textbf{NC} procedure to decide if a given automaton is minimal.  For
the associated function problem, that of minimising a given automaton,
we give a randomised \textbf{NC} procedure.  Thus the situation for
minimisation is similar to that for equivalence: the decision problem
is in \textbf{NC} whereas the function problem can only be shown to be
in \textbf{RNC}.

In Section~\ref{sec:commutative} we consider probabilistic automata
with rewards on transitions, which can be seen as partially observable
Markov decision processes.  Rewards (and costs, which can be
considered as negative rewards) are omnipresent in probabilistic
modelling for capturing quantitative effects of probabilistic
computations, such as consumption of time, allocation of memory,
energy usage, etc.  For these automata we consider a notion of
\emph{expectation equivalence}, requiring that two automata have the
same expected reward on each input word, and a stronger notion of
\emph{distribution equivalence}, requiring that each word induce the
same distribution on rewards in both automata.  In both cases we give
decision procedures for equivalence by reduction to the case of
$\mathbb{Q}$-weighted automata, thus inheriting the complexity bounds
established there.

We present a case study in which costs are used to model the
computation time required by an RSA encryption algorithm, and show
that the vulnerability of the algorithm to timing attacks depends on
the equivalence of associated probabilistic reward automata.
In~\cite{Kocher96} two possible defenses against such timing leaks
were suggested.  We also analyse their effectiveness.

In Section~\ref{sec:vpa} we consider pushdown automata.  Probabilistic
pushdown automata are a natural model of recursive probabilistic
procedures, stochastic grammars and branching
processes~\cite{EY09,KEM06}.  The equivalence problem for
deterministic pushdown automata has been extensively
studied~\cite{Senizergues97,Stirling02}.  We study the equivalence
problem for \emph{$\mathbb{Q}$-weighted visibly pushdown automata
  (VPA)}~\cite{AM04}.  In a visibly pushdown automaton the stack
operation of a given transition---whether to pop or push---is
determined by the input symbol being read.

We show that the equivalence problem for $\mathbb{Q}$-weighted VPA is
logspace equivalent to \emph{Arithmetic Circuit Identity Testing
  (\textbf{ACIT})}, which is the problem of determining equivalence of
polynomials presented via arithmetic circuits~\cite{ABKM09}.  Several
polynomial-time randomized algorithms are known for \textbf{ACIT}, but
it is a major open problem whether it can be solved in polynomial time
by a deterministic algorithm.  A closely related result is that of
Seidl~\cite{Seidl90}, that equivalence of $\Q$-weighted tree automata
is decidable in randomised polynomial time.  However~\cite{Seidl90}
does not establish a connection with \textbf{ACIT} in either
direction.


%

\section{Preliminaries} \label{sec:prelim}
\subsection{Complexity Classes}
Recall that \textbf{NC} is the subclass of \textbf{P} comprising those
problems considered efficiently parallelisable.  \textbf{NC} can be
defined via \emph{parallel random-access machines (PRAMs)}, which
consist of a set of processors communicating through a shared memory.
A problem is in \textbf{NC} if it can be solved in time $(\log
n)^{O(1)}$ (polylogarithmic time) on a PRAM with $n^{O(1)}$
(polynomially many) processors.  A more abstract definition of
\textbf{NC} is as the class of languages which have \textbf{L}-uniform
Boolean circuits of polylogarithmic depth and polynomial size.  More
specifically, denote by $\mathbf{NC}^k$ the class of languages which
have circuits of depth $O(\log^k n)$.  The complexity class
\textbf{RNC} consists of those languages with randomized \textbf{NC}
algorithms.  We have the following chain of inclusions, none of which
is known to be strict:
\[ \mathbf{NC}^1 \subseteq \mathbf{L} \subseteq \mathbf{NL}
   \subseteq \mathbf{NC}^2 \subseteq \mathbf{NC} \subseteq \mathbf{RNC} 
\cap \mathbf{P} \subseteq \mathbf{P}
   \, .\]
We also have $\mathbf{NC}^k \subseteq \mathbf{SPACE}(O(\log^k n))$,
that is, problems in \textbf{NC} are solvable in polylogarithmic space.

Problems in \textbf{NC} include reachability in directed graphs,
computing the rank and determinant of an integer matrix, solving
linear systems of equations, and the Tree Isomorphism problem.
Problems that are \textbf{P}-hard under logspace reductions include
Circuit Value and Max Flow.  Such problems are not in \textbf{NC}
unless $\mathbf{P}=\textbf{NC}$.  Problems in $\mathbf{RNC}\cap
\mathbf{P}$ include matching in graphs and max flow in $0/1$-valued
networks.  In both cases these problems have resisted classification
as either being in \textbf{NC} or \textbf{P}-hard.  See~\cite{GHR95}
for more details about \textbf{NC} and \textbf{RNC}.
\subsection{Linear Algebra}
\label{sub:linear-algebra}
Given an $m \times n$ matrix $A=(a_{ij})$ and a $k\times l$ matrix
$B=(b_{ij})$, the \emph{Kronecker product} $A \otimes B$ is an $km
\times nl$ matrix defined by
\[ A \otimes B = \left[
 \begin{array}{ccc}
  a_{11}B & \cdots & a_{1n}B \\
  \vdots &         & \vdots \\
  a_{m1}B & \cdots & a_{mn} B
 \end{array}\right ] \]
The following is a key property of the \emph{Kronecker product}:
\begin{proposition}
$(A\otimes B)(C \otimes D) = (AC \otimes BD)$ for
matrices $A,B,C,D$ of appropriate dimensions.
\label{prop:kronecker}
\end{proposition}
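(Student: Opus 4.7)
The plan is to prove the identity by direct computation exploiting the block-matrix structure of the Kronecker product. First I would fix dimensions to ensure compatibility: take $A$ of size $m\times n$, $B$ of size $k\times l$, $C$ of size $n\times p$, and $D$ of size $l\times q$, so that both $(A\otimes B)(C\otimes D)$ and $AC\otimes BD$ are $mk\times pq$ matrices, and the intermediate products $AC$ and $BD$ are well-defined.

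Next I would view $A\otimes B$ as the $m\times n$ block matrix whose $(i,j)$-block is $a_{ij}B$, and $C\otimes D$ as the $n\times p$ block matrix whose $(j,r)$-block is $c_{jr}D$. The key observation is that block matrix multiplication, applied with these $B$-sized and $D$-sized blocks, is compatible with the underlying scalar entry structure because the inner dimension $nl$ of $A\otimes B$ is partitioned consistently with the outer dimension $n$ of blocks of size $l$ in $C\otimes D$.

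Applying block multiplication, the $(i,r)$-block of the product is
\[
\sum_{j=1}^{n} (a_{ij}B)(c_{jr}D) \;=\; \Bigl(\sum_{j=1}^{n} a_{ij}c_{jr}\Bigr) BD \;=\; (AC)_{ir}\,BD,
\]
which is precisely the $(i,r)$-block of $AC\otimes BD$. Matching blocks for all $i$ and $r$ yields the desired identity.

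The only subtle point—really the one thing to get right—is that the block decomposition of $(A\otimes B)$ and $(C\otimes D)$ has the inner block-dimensions agreeing (the $B$'s have $l$ columns and the $D$'s have $l$ rows), so the scalar factors $a_{ij}$ and $c_{jr}$ can be pulled out and recombined. Once that alignment is observed, the calculation is a one-line verification and I expect no significant obstacle. No induction or cleverness is needed beyond tracking indices carefully.
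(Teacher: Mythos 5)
Your proof is correct: the dimension bookkeeping is right ($A\otimes B$ is $mk\times nl$, $C\otimes D$ is $nl\times pq$, so the product is defined and has the same shape $mk\times pq$ as $AC\otimes BD$), and the block computation $\sum_{j}(a_{ij}B)(c_{jr}D)=(AC)_{ir}\,BD$ is exactly the standard argument for the mixed-product property. Note that the paper itself gives no proof of this proposition---it is stated as a known key fact about Kronecker products---so there is nothing to compare against; your block-matrix verification is the canonical textbook proof and fills that gap correctly.
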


Given two $m \times n$ matrices $A=(a_{ij})$ and $B=(b_{ij})$, the
\emph{Hadamard product} $C=A \odot B$ is the $m\times n$ matrix
defined by $c_{ij}=a_{ij}b_{ij}$.

\subsection{Laurent Polynomials}
A \emph{Laurent polynomial} in variables $t_1,\ldots,t_n$ with
coefficients in $\mathbb{Q}$ is an expression of the form $p =
\sum_{i\in I} a_i t_1^{i_1}\dots t_n^{i_n}$, where $I \subseteq
\mathbb{Z}^n$ is a finite set and $a_i \in \mathbb{Q}$.  We say that
$p$ has degree bound $d$ if $|i_1|+\ldots+|i_n| \leq d$.  We write
$\mathbb{Q}[t_1,t_1^{-1},\ldots,t_n,t_n^{-1}]$ for the ring of such
polynomials, with the usual addition and multiplication operations; we
furthermore write $\mathbb{Q}(t_1,t_1^{-1},\ldots,t_n,t_n^{-1})$ for
the corresponding field of fractions, whose elements are quotients of
Laurent polynomials.

The following proposition immediately follows from the cofactor
formula for matrix inversion.
\begin{proposition}
Let $M$ be an $m\times m$ matrix with entries in
$\mathbb{Q}[t_1,t_1^{-1},\ldots,t_n,t_n^{-1}]$ of degree bound $d$.
If $\mathrm{det}(I-M)\neq 0$, then $I-M$ is invertible over
$\mathbb{Q}(t_1,t_1^{-1},\ldots,t_n,t_n^{-1})$, and each entry of
$(I-M)^{-1}$ can be represented as the quotient of Laurent
polynomials, each of degree bound at most $md$.
\label{prop:degree-bound}
\end{proposition}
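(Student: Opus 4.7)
The plan is to directly apply the cofactor (adjugate) formula for matrix inversion: over any field $K$, if $A$ is an $m\times m$ matrix with $\det(A)\neq 0$, then $A^{-1} = \mathrm{adj}(A)/\det(A)$, where $\mathrm{adj}(A)_{ji}$ is $(-1)^{i+j}$ times the determinant of the $(m-1)\times(m-1)$ submatrix obtained by deleting row $i$ and column $j$. Applying this with $A = I-M$ and $K = \mathbb{Q}(t_1,t_1^{-1},\ldots,t_n,t_n^{-1})$ reduces the statement to a bookkeeping argument about how the degree bound behaves under the ring operations.

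First I would note the two basic degree-bound facts for Laurent polynomials: if $p$ has degree bound $d_p$ and $q$ has degree bound $d_q$, then $p+q$ has degree bound at most $\max(d_p,d_q)$, and $pq$ has degree bound at most $d_p+d_q$ (every monomial $t^{i+j}$ in the product satisfies $|i+j|\le |i|+|j|$). By induction these extend to arbitrary finite sums and products.

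Next I would observe that the entries of $I-M$ still have degree bound $d$, since the identity contributes only constants (degree bound $0$) and $d\ge 0$ without loss of generality. The Leibniz formula
\[
\det(I-M) \;=\; \sum_{\sigma \in S_m} \mathrm{sgn}(\sigma)\prod_{i=1}^{m}(I-M)_{i,\sigma(i)}
\]
expresses $\det(I-M)$ as a sum of products of $m$ entries, so by the two facts above its degree bound is at most $md$. The same reasoning applied to each $(m-1)\times(m-1)$ minor shows every entry of $\mathrm{adj}(I-M)$ has degree bound at most $(m-1)d \le md$.

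Finally, since $\det(I-M)\neq 0$ by hypothesis, it is a unit in the field of fractions, so $(I-M)^{-1} = \mathrm{adj}(I-M)/\det(I-M)$ is well-defined and each of its entries is the quotient of an entry of $\mathrm{adj}(I-M)$ by $\det(I-M)$; both numerator and denominator have degree bound at most $md$, as required. There is no real obstacle here — the only thing to be careful about is checking that the cofactor identity, which is a formal polynomial identity valid over any commutative ring, transfers to our Laurent polynomial setting, and that taking an identity $I$ does not inflate the per-entry degree bound.
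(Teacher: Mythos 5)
Your proposal is correct and follows exactly the route the paper intends: the paper simply remarks that the proposition ``immediately follows from the cofactor formula for matrix inversion,'' and your argument is a careful spelling-out of that same cofactor/adjugate approach, with the Leibniz expansion supplying the degree bound $md$ for the determinant and $(m-1)d\le md$ for the cofactors. The degree-bound bookkeeping (sums take the maximum bound, products add bounds) is accurate, so nothing is missing.
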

In the situation of Proposition~\ref{prop:degree-bound} we denote
$(I-M)^{-1}$ by $M^*$.

\section{Equivalence of \texorpdfstring{$\Q$}{Q}-Weighted Automata}
\label{sec:rand}

Given a field $(\F,+,\cdot,0,1)$, an \emph{$\F$-weighted automaton}
$\A = (n, \Sigma, M, \valpha, \veta)$ consists of a positive integer
$n \in \N$ representing the number of states, a finite
alphabet~$\Sigma$, a map $M : \Sigma \to \F^{n \times n}$ assigning a
transition matrix to each alphabet symbol, an initial (row) vector
$\valpha \in \F^n$, and a final (column) vector $\veta \in \F^n$.  We
extend $M$ to $\Sigma^*$ as the matrix product $M(\sigma_1 \ldots
\sigma_k) := M(\sigma_1) \cdot \ldots \cdot M(\sigma_k)$.  The
automaton~$\A$ assigns to each word~$w$ a \emph{weight} $\A(w) \in \F$,
where $\A(w) := \valpha M(w) \veta$.  An automaton~$\A$ is said to be
\emph{zero} if $\A(w) = 0$ for all $w \in \Sigma^*$.  Two automata
$\B, \C$ over the same alphabet~$\Sigma$ are said to be
\emph{equivalent} if $\B(w) = \C(w)$ for all $w \in \Sigma^*$.  

Given two automata $\B,\C$ that are to be checked for equivalence, one
can compute an automaton~$\A$ with $\A(w) = \B(w) - \C(w)$ for all $w
\in \Sigma^*$.  Then $\A$ is zero if and only if $\B$ and~$\C$ are
equivalent.  Given $\B = (\ns{\B}, \Sigma, \Ms{\B}, \alphas{\B},
\etas{\B})$ and $\C = (\ns{\C}, \Sigma, \Ms{\C}, \alphas{\C},
\etas{\C})$, set $\A= (n, \Sigma, M, \valpha, \veta)$ with $n :=
\ns{\B} + \ns{\C}$ and\\[-1mm]
 \[
  M(\sigma) := \begin{pmatrix} \Ms{\B}(\sigma) & 0 \\ 0 & \Ms{\C}(\sigma) \end{pmatrix}\,, \qquad
  \valpha := (\alphas{\B}, -\alphas{\C})\,, \qquad
  \veta := \begin{pmatrix} \etas{\B} \\ \etas{\C} \end{pmatrix}\,.
 \]
This reduction allows us to focus on \emph{zeroness}, i.e., the
problem of determining whether a given $\F$-weighted automaton is
zero.  (Since transition weights can be negative, zeroness is not the
same as emptiness of the underlying unweighted automaton.)  Note that
a witness word $w \in \Sigma^*$ against zeroness of~$\A$ is also a
witness against the equivalence of $\B$ and~$\C$.

In the remainder of this section we present two randomised
$\mathbf{NC}^2$ algorithm algorithms for deciding equivalence of
$\mathbb{Q}$-weighted automata.  The following result from~\cite{Tzeng}
immediately implies decidability of testing zeroness, and hence
equivalence, of $\mathbb{Q}$-weighted automata.

\begin{proposition}
Let $\mathbb{F}$ be any field and
$\mathcal{A}=(n,\Sigma,M,\valpha,\veta)$ an
$\mathbb{F}$-weighted automaton.  Then: (i)~$\Span \{ \valpha M(w) : w
\in \Sigma^*\} = \Span \{ \valpha M(w) : w \in \Sigma^{<n} \}$; (ii) if
$\A$ is not equal to the zero automaton then there exists a word $w
\in \Sigma^*$ of length at most $n-1$ such that $\A(w)\neq 0$.
\label{prop:short}
\end{proposition}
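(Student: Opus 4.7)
The plan is to prove (i) by a standard saturation argument on an ascending chain of subspaces, and then derive (ii) as an immediate corollary by viewing $\veta$ as a linear functional.

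For (i), I would define for each $k \geq 0$ the subspace
\[ V_k := \Span\{\valpha M(w) : w \in \Sigma^{<k}\} \subseteq \F^n, \]
so that $V_0 \subseteq V_1 \subseteq V_2 \subseteq \cdots$. The key observation is that if ever $V_k = V_{k+1}$, then the chain stabilises, i.e.\ $V_k = V_m$ for every $m \geq k$. To see this, it suffices (by induction on $m$) to show $V_{k+1} = V_{k+2}$ whenever $V_k = V_{k+1}$. Any generator of $V_{k+2}$ has the form $\valpha M(w\sigma)$ with $|w| \leq k$ and $\sigma \in \Sigma$. By assumption $\valpha M(w) \in V_k = V_{k+1}$, so we can write $\valpha M(w) = \sum_i c_i\, \valpha M(w_i)$ with $|w_i| < k$; multiplying on the right by $M(\sigma)$ gives $\valpha M(w\sigma) = \sum_i c_i\, \valpha M(w_i \sigma)$, and each $w_i\sigma$ has length at most $k$, so this lies in $V_{k+1}$.

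Since $V_k \subseteq \F^n$ has dimension at most $n$, and $V_0 = \{0\}$, the strict inclusions can occur at most $n$ times before saturation. Hence $V_n = V_{n+1} = V_{n+2} = \cdots$, and therefore $V_n = \bigcup_{k \geq 0} V_k = \Span\{\valpha M(w) : w \in \Sigma^*\}$, establishing (i).

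For (ii), suppose towards a contradiction that $\A(w) = \valpha M(w)\veta = 0$ for every word $w$ of length less than $n$. The map $\vx \mapsto \vx \veta$ is a linear functional on $\F^n$ that vanishes on every generator of $V_n$, hence on all of $V_n$. By part (i), $V_n$ contains $\valpha M(w)$ for every $w \in \Sigma^*$, so $\A(w) = 0$ for all $w$, contradicting the assumption that $\A$ is not the zero automaton. There is no real obstacle here; the only subtlety is making sure the saturation step correctly propagates from one letter of extension to arbitrary extensions, which the inductive formulation above handles cleanly.
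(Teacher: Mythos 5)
Your proof is correct. Note that the paper does not prove Proposition~\ref{prop:short} at all --- it is imported from Tzeng (and ultimately Sch\"utzenberger) --- and your ascending-chain argument $V_0 \subseteq V_1 \subseteq \cdots$ with the stabilisation step, followed by applying the functional $\vx \mapsto \vx\veta$ for part~(ii), is exactly the standard proof given in that cited work. The only cosmetic quibble is that the generators of $V_{k+2}$ also include $\valpha$ itself (the empty word), which is not of the form $\valpha M(w\sigma)$, but it lies in $V_1 \subseteq V_{k+1}$ trivially, so nothing is affected.
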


\subsection{Algorithm Based on the Schwartz-Zippel Lemma}
\label{sub:cav-algorithm}

By Proposition~\ref{prop:short} a $\mathbb{Q}$-weighted automaton with
$n$ states is zero if and only if its \emph{$n$-bounded language} is
zero, that is, it assigns weight zero to all words of length at most
$n$.  Inspired by the work of Blum, Carter and Wegman on free Boolean
graphs~\cite{BCW80}, we represent the \emph{$n$-bounded language} of an
automaton by a polynomial in which each monomial represents a word and
the coefficient of the monomial represents the weight of the word.  We
thereby reduce the zeroness problem to polynomial identity testing,
for which there are a number of efficient randomised procedures.

Let $\mathcal{A}=(n,\Sigma,M,\valpha,\veta)$
 be a $\mathbb{Q}$-weighted automaton.
We introduce a family of variables
$\vx = \{ x_{\sigma,i} : \sigma \in \Sigma,\; 1 \leq i \leq n\}$
and associate the monomial $x_{w_1,1}x_{w_2,2}\ldots x_{w_k,k}$ with a
word $w = w_1w_2 \ldots w_k$ of length $k \leq n$.
Then we define the polynomial $P(\vx)$ by
\begin{gather}
P(\vx) := \sum_{k=0}^{n-1} \, \sum_{w \in \Sigma^{k}}
\mathcal{A}(w) \cdot x_{w_1,1}x_{w_2,2}\ldots x_{w_k,k} \, .
\label{eq:poly}
\end{gather}
It is immediate from Proposition~\ref{prop:short} that $P(\vx) \equiv
0$ if and only if $\mathcal{A}$ is zero.

To test whether $P(\vx) \equiv 0$ we
select a value for each variable $x_{\sigma,i}$ independently and
uniformly at random from a set of integers of size $K n$, for some
constant $K$.  Clearly if $P(\vx) \equiv 0$
then this yields the value~$0$.
On the other hand, if $P(\vx) \not\equiv 0$ then
$P$ will evaluate to a nonzero value with probability at least
$(K-1)/K$ by the following result of De Millo and Lipton~\cite{DL78},
Schwartz~\cite{Schwartz80} and Zippel~\cite{Zippel79} and the fact
that $P$ has degree~$n-1$.

\begin{theorem}[\cite{DL78,Schwartz80,Zippel79}] 
Let $\mathbb{F}$ be a field and
  $Q(x_1,\ldots,x_n) \in \mathbb{F}[x_1,\ldots,x_n]$ a multivariate
  polynomial of total degree $d$.  Fix a finite set $\mathbb{S}
  \subseteq \mathbb{F}$, and let $r_1,\ldots,r_n$ be chosen
  independently and uniformly at random from $\mathbb{S}$.  Then
\[ \Pr[Q(r_1,\ldots,r_n)=0 \mid Q(x_1,\ldots,x_n) \not\equiv
0] \leq \frac{d}{|\mathbb{S}|} \, .\]
\label{thm:schwartz-zippel}
\end{theorem}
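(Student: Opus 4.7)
The plan is to prove the bound by induction on the number of variables $n$, using the standard decomposition argument that factors out the highest power of $x_n$.

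For the base case $n=1$, the statement is classical: a nonzero univariate polynomial over a field of degree $d$ has at most $d$ roots (since $\mathbb{F}[x]$ is a Euclidean domain), so choosing $r_1$ uniformly from $\mathbb{S}$ gives $\Pr[Q(r_1)=0] \leq d/|\mathbb{S}|$.

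For the inductive step, I would write
\[ Q(x_1,\ldots,x_n) \;=\; \sum_{i=0}^{k} x_n^i \, Q_i(x_1,\ldots,x_{n-1}), \]
where $k \leq d$ is the largest integer such that $Q_k$ is not identically zero. Note that $Q_k$ is a nonzero polynomial in $n-1$ variables of total degree at most $d-k$. Consider two events: event $E_1$ that $Q_k(r_1,\ldots,r_{n-1}) = 0$, and event $E_2$ that $Q(r_1,\ldots,r_n) = 0$. By the inductive hypothesis applied to $Q_k$,
\[ \Pr[E_1] \;\leq\; \frac{d-k}{|\mathbb{S}|}. \]
Conditioned on the complement of $E_1$, the univariate polynomial $Q(r_1,\ldots,r_{n-1},x_n) \in \mathbb{F}[x_n]$ has degree exactly $k$ and is nonzero, so by the base case applied to this polynomial (with $r_n$ still independent and uniform on $\mathbb{S}$),
\[ \Pr[E_2 \mid \neg E_1] \;\leq\; \frac{k}{|\mathbb{S}|}. \]
Combining by the union bound, $\Pr[E_2] \leq \Pr[E_1] + \Pr[E_2 \mid \neg E_1] \leq (d-k)/|\mathbb{S}| + k/|\mathbb{S}| = d/|\mathbb{S}|$, as required.

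The main subtlety to watch out for is the independence argument when conditioning: because $r_n$ is chosen independently of $r_1,\ldots,r_{n-1}$, conditioning on any event determined by $r_1,\ldots,r_{n-1}$ leaves $r_n$ uniform on $\mathbb{S}$, so the base-case bound applies verbatim to the residual univariate polynomial. Other than this observation, the argument is a clean double induction/union bound with no significant obstacle.
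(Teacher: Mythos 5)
Your proof is correct: it is the standard Schwartz-style induction on the number of variables, splitting off the highest power of $x_n$, applying the inductive hypothesis to the leading coefficient $Q_k$ and the univariate root bound to the residual polynomial, and combining via $\Pr[E_2] \leq \Pr[E_1] + \Pr[E_2 \mid \neg E_1]$ (strictly a total-probability split rather than a union bound, but the inequality you use is valid). The paper does not prove this theorem at all --- it cites it to DeMillo--Lipton, Schwartz and Zippel --- so there is nothing in the paper to compare against; your argument matches the classical proofs in those references, and your handling of the conditioning (that $r_n$ remains uniform and independent, so the univariate bound applies for each fixed admissible choice of $r_1,\ldots,r_{n-1}$ and hence after averaging) is exactly the point that needs care.
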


While the number of monomials in $P$ is proportional to $|\Sigma|^n$, i.e., exponential
in $n$, writing
\begin{gather}
P(\vx) = \valpha  \left( \sum_{i=0}^n
\prod_{j=1}^i \sum_{\sigma\in \Sigma} x_{\sigma,j} \cdot M(\sigma) \right) \veta
\label{eq:evaluation}
\end{gather}
it is clear that $P$ can be
evaluated on a particular set of numerical arguments in time
polynomial in $n$.  The formula~\eqref{eq:evaluation} can be evaluated
in a forward direction, starting with the initial state vector
$\valpha$ and post-multiplying by the transition matrices, or in a
backward direction, starting with the final state vector $\veta$
and pre-multiplying by the transition matrices.  In either case we get
a polynomial-time Monte-Carlo algorithm for testing zeroness of
$\mathbb{Q}$-weighted automata.  The backward variant is shown in
Figure~\ref{fig:rand}.

\begin{figure}
\begin{minipage}{\textwidth}
\textbf{Algorithm $\mathbf{ZERO}$}\\[1ex]
\textbf{Input:} Automaton $\A = (n, \Sigma, M, \valpha,\veta)$\\[2ex]
  if $\valpha \veta \ne 0$ \\
  \ind return ``$\valpha \veta = \A(\varepsilon) \ne 0$'' \\
  $\vv$ := $\veta$ \\
  for $i$ from $1$ to~$n$ do \\
  \ind choose a random vector $r \in \{1,2,\ldots,Kn\}^\Sigma$ \\
  \ind $\vv$ := $\sum_{\sigma \in \Sigma} r(\sigma) M(\sigma) \vv$ \\
  \ind if $\valpha \vv \ne 0$ \\
  \indd return ``$\exists w$ with $|w| = i$ such that $\A(w) \ne 0$'' \\
  return ``$\A$ is zero with probability at least $(K-1)/K$''
\end{minipage}
\caption{Algorithm for testing zeroness} \label{fig:rand}
\end{figure}

\begin{figure}
\begin{minipage}{\textwidth}
\textbf{Algorithm $\mathbf{ZERO + CEX}$}\\[1ex]
\textbf{Input:} Automaton $\A = (n, \Sigma, M, \valpha,\veta)$\\[2ex]
  if $\valpha \veta \ne 0$ \\
  \ind return ``$\valpha \veta = \A(\varepsilon) \ne 0$'' \\
  $\vv_{0}$ :=  $\veta$\\
  for $i$ from $1$ to~$n$ do \\
    \ind choose a random vector $r \in \{1,2,\ldots,Kn\}^\Sigma$ \\
    \ind $\vv_{i}$ := $\left ( \sum_{\sigma \in \Sigma} r(\sigma) M(\sigma) \right) \vv_{i-1}$\\
    \ind if $\valpha \vv_i \ne 0$ \\
    \indd $w$ := $\varepsilon$ \\
    \indd $\vu$ := $\valpha$ \\
    \indd for $j$ from $i$ downto~$1$ do \\
    \inddd choose $\sigma \in \Sigma$ with $\vu M(\sigma) \vv_{j-1} \neq 0$ \\
    \inddd $w$ := $w\sigma$ \\
    \inddd $\vu$ := $\vu M(\sigma)$\\
    \indd return ``$\vu \veta = \A(w) \ne 0$''\\
  return ``$\A$ is zero with probability at least $(K-1)/K$''
\end{minipage}
\caption{Algorithm for testing zeroness, with counterexamples \label{fig:rand_ce}}
\end{figure}

The algorithm runs in time $O(n \cdot |M|)$, where $|M|$ is the number of
nonzero entries in all $M(\sigma)$, provided that sparse-matrix
representations are used.  In a set of case studies
this randomised algorithm outperformed deterministic algorithms~\cite{Cav11}.

We can obtain counterexamples from the randomised algorithm by
exploiting the self-reducible structure of the equivalence problem.
We generate counterexamples incrementally, starting with the empty
string and using the randomised algorithm as an oracle to know at each
stage what to choose as the next letter in our counterexample.  For
efficiency reasons it is important to avoid repeatedly running the
randomised algorithm.  In fact, as shown in Figure~\ref{fig:rand_ce},
this can all be made to work with some post-processing following a
single run of the randomised procedure.

To evaluate the polynomial~$P(\vx)$ we substitute a set of randomly
chosen rational values $\vr = \{ r_{\sigma,i} :\sigma \in \Sigma,\; 1
\leq i \leq n\}$ into Equation~\eqref{eq:evaluation}.  Here we
generalize this to a notion of \emph{partial evaluation} $P_w(\vr)$ of
polynomial~$P$ with respect to values $\vr$ and a word $w \in
\Sigma^m$, $m \leq n$.  We define
\begin{gather}
P_w(\vr) = \valpha M(w) \left( \sum_{i=m}^{n}
\prod_{j=m+1}^i \sum_{\sigma\in \Sigma} r_{\sigma,j} \, M(\sigma)
\right) \veta \, .
\label{eq:par-evaluation}
\end{gather}
Notice that $P_\varepsilon(\vr) = P(\vr)$,
where $\varepsilon$ is the empty word, and, at the other extreme,
$P_w(\vr) = \A(w)$ for any word~$w$ of length~$n$.

\begin{proposition}
Suppose that $w \in \Sigma^m$, where $m<n$.  If $P_w(\vr) \neq 0$ then either
$\A(w) \neq 0$ or $P_{w\sigma}(\vr) \neq 0$ for some $\sigma\in
\Sigma$.
\label{prop:c-ex}
\end{proposition}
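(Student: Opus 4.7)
The plan is to establish the simple recurrence
\[
P_w(\vr) \;=\; \A(w) \;+\; \sum_{\sigma \in \Sigma} r_{\sigma,m+1}\, P_{w\sigma}(\vr)\,,
\]
after which the statement follows instantly: if $\A(w)=0$ and $P_{w\sigma}(\vr)=0$ for every $\sigma \in \Sigma$, then the right-hand side is $0$, contradicting $P_w(\vr)\neq 0$.

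To derive this recurrence, I would first abbreviate $S_j := \sum_{\sigma \in \Sigma} r_{\sigma,j}\, M(\sigma)$ so that Equation~\eqref{eq:par-evaluation} reads $P_w(\vr) = \valpha M(w)\bigl(\sum_{i=m}^{n} \prod_{j=m+1}^{i} S_j\bigr)\veta$. The $i=m$ term has an empty product, contributing the identity matrix $I$, so $\valpha M(w)\cdot I\cdot \veta = \A(w)$. This peels off the first summand.

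Next I would factor $S_{m+1}$ out of the remaining terms: for $i \geq m+1$, $\prod_{j=m+1}^{i} S_j = S_{m+1}\cdot \prod_{j=m+2}^{i} S_j$. Expanding $S_{m+1}$ back into $\sum_{\sigma} r_{\sigma,m+1}M(\sigma)$ and using $M(w)M(\sigma) = M(w\sigma)$, the remaining sum becomes
\[
\sum_{\sigma \in \Sigma} r_{\sigma,m+1}\, \valpha M(w\sigma) \left(\sum_{i=m+1}^{n} \prod_{j=m+2}^{i} S_j\right)\veta\,,
\]
and matching this against the definition of $P_{w\sigma}(\vr)$ (with $m$ replaced by $m+1$, noting that $|w\sigma|=m+1 \leq n$ since $m<n$) shows the bracketed factor is exactly $P_{w\sigma}(\vr)$. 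Combining the two pieces yields the recurrence, completing the proof.

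I do not expect any substantive obstacle; the entire argument is a direct manipulation of the defining formula~\eqref{eq:par-evaluation}. The only point that requires a moment of care is the index bookkeeping on the nested product-sum and the verification that the $i=m$ term contributes precisely $\A(w)$.
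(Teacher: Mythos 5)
Your proposal is correct and follows essentially the same route as the paper: both establish the recurrence $P_w(\vr) = \A(w) + \sum_{\sigma \in \Sigma} r_{\sigma,m+1}\, P_{w\sigma}(\vr)$ by splitting off the $i=m$ term of the defining sum and factoring the $j=m+1$ factor out of the rest, then conclude by the (contrapositive of the) obvious implication. The only cosmetic slip is calling the bracketed sum itself $P_{w\sigma}(\vr)$ --- it is the full product $\valpha M(w\sigma)(\cdots)\veta$ that equals $P_{w\sigma}(\vr)$ --- but your intent is clear and the argument is sound.
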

\proof
We prove the contrapositive: if $\A(w)=0$ and $P_{w\sigma}(\vr) = 0$
for each $\sigma \in \Sigma$, then $P_w(\vr) = 0$.  This immediately
follows from the equation
\begin{gather*}
P_w(\vr) = \A(w) + \sum_{\sigma \in \Sigma} r_{\sigma,m+1} \, P_{w\sigma}(\vr) \, .
\end{gather*}
This equation is established from the definition of $P_w(\vr)$ as follows:
\begin{eqnarray*}
P_w(\vr) & = & \valpha M(w) \left( \sum_{i=m}^{n} \, 
\prod_{j=m+1}^i \, \sum_{\sigma\in \Sigma} r_{\sigma,j} \, M(\sigma)
\right) \veta\\
& = & \mathcal{A}(w) + \valpha M(w) \left( \sum_{i={m+1}}^{n} \,
\prod_{j=m+1}^i \, \sum_{\sigma\in \Sigma} r_{\sigma,j} \, M(\sigma)
\right) \veta\\
& = & \mathcal{A}(w) + \sum_{\sigma\in \Sigma} r_{\sigma,m+1} \, \valpha M(w\sigma)
\left( \sum_{i={m+1}}^{n} \,
\prod_{j=m+2}^i \, \sum_{\sigma\in \Sigma} r_{\sigma,j} \, M(\sigma)
\right) \veta \\
& = & \A(w) + \sum_{\sigma \in \Sigma} r_{\sigma,m+1} \,
P_{w\sigma}(\vr) \, .\rlap{\hbox to 205 pt{\hfil\qEd}}
\end{eqnarray*}

From Proposition~\ref{prop:c-ex} it is clear that the algorithm in
Figure~\ref{fig:rand_ce} generates a counterexample trace given
$\vr$ such that $P(\vr) \neq 0$.

The algorithm in Figure~\ref{fig:rand} can be parallelised, yielding
an $\mathbf{RNC}$ algorithm, as iterated products of matrices can be
computed in~$\mathbf{NC}$.  On the other hand, the algorithm in
Figure~\ref{fig:rand_ce} yields a counterexample, but apparently cannot
be parallelised efficiently because the counterexample is produced
incrementally.

\subsection{Algorithm Based on the Isolating Lemma}

We now develop a randomised $\mathbf{NC}^2$ procedure that can produce
a counterexample in case of inequivalence.  To this end we employ the
Isolating Lemma of Mulmuley, Vazirani and Vazirani~\cite{MVV87}.  We
use this lemma in a very similar way to~\cite{MVV87}, who are
concerned with computing maximum matchings in graphs in \textbf{RNC}.

\begin{lemma}
Let $\mathcal{F}$ be a family of subsets of a set
$\{x_1,\ldots,x_N\}$. Suppose that each element $x_i$ is assigned a
weight $w_i$ chosen independently and uniformly at random from
$\{1,\ldots,2N\}$.  Define the weight of $S \in \mathcal{F}$ to be
$\sum_{x_i \in S}w_i$.  Then the probability that there is a unique
minimum weight set in $\mathcal{F}$ is at least $1/2$.
\end{lemma}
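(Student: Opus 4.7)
The plan is to use the classical ``principle of deferred decisions'' argument that pins down, for each element, a single threshold value of its weight at which it can switch between being in versus out of the minimum weight sets.

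First, I would fix an index $i \in \{1,\ldots,N\}$ and consider the conditional setup in which all weights $w_j$ with $j \neq i$ have already been sampled. With those values fixed, let
\[
W_i^+ \;:=\; \min\{\, \textstyle\sum_{x_k \in S \setminus \{x_i\}} w_k : S \in \mathcal{F},\ x_i \in S\,\}
\]
be the minimum weight among sets containing $x_i$, excluding the contribution of $w_i$ itself, and let $W_i^-$ be the minimum weight of any set in $\mathcal{F}$ not containing $x_i$ (or $+\infty$ if no such set exists). Define the threshold $\alpha_i := W_i^- - W_i^+$. The key observation is: if $w_i < \alpha_i$, then every minimum weight set in $\mathcal{F}$ contains $x_i$; if $w_i > \alpha_i$, then no minimum weight set contains $x_i$. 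Only when $w_i = \alpha_i$ is there ambiguity about whether $x_i$ belongs to a minimum weight set. Call $x_i$ \emph{ambiguous} in this case.

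Next I would show that if the minimum weight set is not unique, then some element must be ambiguous. Suppose $S_1, S_2 \in \mathcal{F}$ are distinct minimum weight sets and pick $x_i \in S_1 \triangle S_2$. Then one of $S_1, S_2$ witnesses that the minimum weight over sets containing $x_i$ equals the minimum weight over sets not containing $x_i$, which forces $w_i = \alpha_i$, i.e., $x_i$ is ambiguous.

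Finally, since $\alpha_i$ depends only on the weights $w_j$ for $j \neq i$, and $w_i$ is chosen independently and uniformly from $\{1,\ldots,2N\}$, we have $\Pr[w_i = \alpha_i] \leq 1/(2N)$. Applying the union bound over $i = 1,\ldots,N$ yields
\[
\Pr[\text{some } x_i \text{ is ambiguous}] \;\leq\; N \cdot \frac{1}{2N} \;=\; \frac{1}{2},
\]
and hence the probability that the minimum weight set is unique is at least $1/2$. The only subtle step is verifying the characterisation of ambiguous elements: one has to check carefully that $w_i < \alpha_i$ strictly forces $x_i$ into every optimum and $w_i > \alpha_i$ strictly excludes it, which follows directly from the definitions of $W_i^+$ and $W_i^-$.
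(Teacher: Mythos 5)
Your proof is correct: it is precisely the classical ``ambiguous element'' / threshold argument of Mulmuley, Vazirani and Vazirani, where $\alpha_i$ is determined by the weights $w_j$, $j\neq i$, so that $\Pr[w_i=\alpha_i]\le 1/(2N)$ and a union bound finishes the job. The paper itself gives no proof of this lemma---it is quoted from~\cite{MVV87}---so your argument simply reproduces the standard proof that the citation points to, and it does so soundly (including the key checks that non-uniqueness forces some element in a symmetric difference to be ambiguous, and that $\alpha_i$ is independent of $w_i$).
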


We will apply the Isolating Lemma in conjunction with
Proposition~\ref{prop:short} to decide zeroness of a $\Q$-weighted
automaton~$\A$.  Suppose $\A$ has $n$ states and alphabet $\Sigma$.
Given $\sigma \in \Sigma$ and $1 \leq i \leq n$, choose a weight
$w_{i,\sigma}$ independently and uniformly at random from the set
$\{1,\ldots,2|\Sigma|n\}$.  Define the weight of a word $u =
\sigma_1\ldots \sigma_k$, $k \leq n$, to be $\mathrm{wt}(u) :=
\sum_{i=1}^k w_{i,\sigma_i}$.  (The reader should not confuse this
with the weight~$\A(u)$ assigned to~$u$ by the automaton~$\A$.)  Then
we obtain a univariate polynomial $P$ from automaton~$\A$ as follows:
\[ P(x) = \sum_{k=0}^n \sum_{u \in \Sigma^k} \mathcal{A}(u) x^{\mathrm{wt}(u)} \, .\]

If $\A$ is equivalent to the zero automaton then clearly $P \equiv 0$.
On the other hand, if $\A$ is non-zero, then by
Proposition~\ref{prop:short} the set $\mathcal{F} = \{ u \in \Sigma^{\leq n} :
\A(u) \neq 0 \}$ is non-empty.  Thus there is a unique minimum-weight
word $u \in \mathcal{F}$ with probability at least $1/2$ by the
Isolating Lemma.  In this case $P$ contains the monomial
$x^{\mathrm{wt}(u)}$ with coefficient $\A(u)$ as its
smallest-degree monomial.  Thus $P \not\equiv 0$ with probability at
least $1/2$.

It remains to observe that from the formula
\[ P(x) = \valpha \left( \sum_{i=0}^n \prod_{j=1}^i \sum_{\sigma \in
    \Sigma} M(\sigma) x^{w_{j,\sigma}} \right) \veta\] and the fact
that iterated products of matrices of univariate polynomials can be
computed in $\mathbf{NC}^2$~\cite{Cook85} we obtain an $\mathbf{RNC}$
algorithm for determining zeroness of $\Q$-weighted automata.

It is straightforward to extend the above algorithm to obtain an
$\mathbf{RNC}$ procedure that not only decides zeroness of $\A$ but
also outputs a word $u$ such that $\A(u) \neq 0$ in case $\A$ is
non-zero. Assume that $\A$ is non-zero and that the random choice of
weights has isolated a unique minimum-weight word $u =
\sigma_1\ldots\sigma_k$ such that $\A(u) \neq 0$.  To determine
whether $\sigma \in \Sigma$ is the $i$-th letter of $u$ we can
increase the weight $w_{i,\sigma}$ by $1$ while leaving all other
weights unchanged and recompute the polynomial $P(x)$.  Then $\sigma$
is the $i$-th letter in $u$ if and only if the minimum-degree monomial
in $P$ changes.  All of these tests can be done independently,
yielding an $\mathbf{RNC}$ procedure.

\begin{theorem}   
Given two $\Q$-weighted automata $\A$ and $\B$, there is an
\textbf{RNC} procedure that determines whether or not $\A$ and $\B$
are equivalent and that outputs a word $w$ with $\A(w) \neq \B(w)$ in
case $\A$ and $\B$ are inequivalent.
\label{thm:rnc-equiv}
\end{theorem}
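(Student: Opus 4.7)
The plan is to assemble the ingredients already introduced in this section into a single $\mathbf{RNC}$ procedure. First I would reduce equivalence to zeroness using the product construction at the beginning of Section~\ref{sec:rand}: form the automaton $\A$ with $\A(w)=\B(w)-\C(w)$ of size $n = \ns{\B}+\ns{\C}$. This reduction is logspace computable, so it is certainly in $\mathbf{NC}$, and any witness against zeroness of $\A$ is a witness against equivalence of $\B$ and $\C$.

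Next I would apply the isolation-based construction. Draw weights $w_{i,\sigma}\in\{1,\ldots,2|\Sigma|n\}$ independently and uniformly at random, form the univariate polynomial
\[
P(x) \;=\; \valpha \left( \sum_{i=0}^n \prod_{j=1}^i \sum_{\sigma \in \Sigma} M(\sigma)\, x^{w_{j,\sigma}} \right) \veta \, ,
\]
and decide zeroness by checking whether all coefficients of $P$ vanish. By Proposition~\ref{prop:short} the family $\mathcal{F}=\{u\in\Sigma^{\le n}:\A(u)\ne 0\}$ is non-empty whenever $\A$ is non-zero, so the Isolating Lemma guarantees that with probability at least $1/2$ there is a unique minimum-weight word $u^\star\in\mathcal{F}$; the monomial $\A(u^\star)\,x^{\mathrm{wt}(u^\star)}$ is then the lowest-degree term of $P$ and in particular $P\not\equiv 0$. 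Since $P$ has degree bounded polynomially in $n$ and $|\Sigma|$, and iterated products of matrices whose entries are univariate polynomials of polynomial degree can be computed in $\mathbf{NC}^2$~\cite{Cook85}, all coefficients of $P$ are obtainable in $\mathbf{NC}^2$ and hence zeroness is decided in $\mathbf{RNC}$.

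To produce a counterexample, I would perform the following parallel reconstruction of $u^\star$. Assuming the isolation event has occurred, compute the minimum-degree monomial $x^{d^\star}$ of $P$ and read off its length $k$ from $d^\star$ (since $\mathrm{wt}(u^\star)$ determines $|u^\star|$ once one records, as a by-product of forming $P$, the minimum weight attained at each length; equivalently one can form a bivariate generating polynomial that also tracks length, still evaluable in $\mathbf{NC}^2$). For each pair $(i,\sigma)$ with $1\le i\le k$ and $\sigma\in\Sigma$, in parallel, replace $w_{i,\sigma}$ by $w_{i,\sigma}+1$ leaving all other weights unchanged, recompute the corresponding polynomial $P^{(i,\sigma)}$, and inspect its minimum-degree monomial. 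By uniqueness of $u^\star$, the minimum-degree monomial shifts if and only if $\sigma$ is the $i$-th letter of $u^\star$. These polynomially many tests are independent and each is in $\mathbf{NC}^2$, so the whole reconstruction is in $\mathbf{RNC}$. Finally, recompute $\A(u^\star)$ deterministically to confirm $\A(u^\star)\ne 0$; if the check fails (e.g.\ the isolation event did not occur), report failure and retry, giving the standard $\mathbf{RNC}$ success amplification.

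The only genuinely delicate point is the parallel extraction of $u^\star$. Two things must hold simultaneously: the isolation event must be detectable (handled by verifying $\A(u^\star)\ne 0$ after reconstruction) and the perturbation test must be unambiguous (handled by the uniqueness guarantee of the Isolating Lemma, plus the observation that a $+1$ shift in a single $w_{i,\sigma}$ changes the weight of every word whose $i$-th letter is $\sigma$ by exactly $1$ and leaves all other weights fixed, so the new minimum-degree monomial either coincides with the old one or increases by exactly $1$). With these in place the procedure meets the specification of Theorem~\ref{thm:rnc-equiv}.
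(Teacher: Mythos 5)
Your proposal is correct and follows essentially the same route as the paper's own proof: reduce equivalence to zeroness of a difference automaton, apply the Isolating Lemma with random position-letter weights so that the lowest-degree term of the univariate polynomial $P(x)$ witnesses non-zeroness with probability at least $1/2$, compute $P$ via iterated products of polynomial matrices in $\mathbf{NC}^2$, and recover the isolated word by perturbing each weight $w_{i,\sigma}$ by one in parallel and checking whether the minimum-degree monomial changes. The only quibble is your aside that the new minimum-degree monomial ``increases by exactly~$1$'' when $\sigma$ is the $i$-th letter (cancellations at that degree could push it higher or annihilate $P$ altogether), but this does not affect the validity of the test, which only needs that the minimum-degree monomial is unchanged precisely when $\sigma$ is not the $i$-th letter of the isolated word.
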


From a practical perspective, the algorithm is less efficient than those from the previous subsection,
 as it requires computations on univariate polynomials rather than on mere numbers.

\section{Minimisation of \texorpdfstring{$\Q$}{Q}-Weighted Automata}
\label{sub:minimisation}

A $\mathbb{Q}$-weighted automaton is \emph{minimal} if there is no
equivalent automaton with strictly fewer states.  It is known that
minimal automata are unique up to a change of basis~\cite{CarlyleP71}.
In this section we give an \textbf{NC} algorithm to decide whether a
given $\mathbb{Q}$-weighted automaton $\mathcal{A}$ is minimal.  We
also give an \textbf{RNC} algorithm that computes a minimal automaton
equivalent to a given $\mathbb{Q}$-weighted automaton $\mathcal{A}$.

\subsection{Deciding Minimality}

Let $\mathcal{A}=(n,\Sigma,M,\valpha,\veta)$ be an automaton.  Define
the (infinite) matrix $F$ to have rows indexed by $\Sigma^*$ and
columns indexed by $\{1,\ldots,n\}$, with the row indexed by $w \in
\Sigma^*$ being the vector $\valpha M(w)$.  The \emph{forward space}
$\mathsf{F}$ is defined to be the row space of $F$.  Similarly define
the matrix $B$ to have rows indexed by $\{1,\ldots,n\}$ and columns
indexed by $\Sigma^*$, with the column indexed by $w \in \Sigma^*$
being the vector $M(w)\veta$.  The \emph{backward space} $\mathsf{B}$
is defined to be the column space of $B$.  The product $H = FB$
is called the \emph{Hankel matrix}; it has rows and columns indexed by
$\Sigma^*$ with $H_{x,y} = \valpha M(x)M(y)\veta = \mathcal{A}(xy)$.
By linear algebra we have $\mathrm{rank}(H) \leq \min\{
\mathrm{rank}(F),\mathrm{rank}(B) \} \leq n$.  A fundamental
result~\cite{CarlyleP71} is that the above inequalities are tight
precisely when $\mathcal{A}$ is minimal:

\begin{proposition}[Carlyle and Paz]
An automaton $\mathcal{A}$ with $n$ states is minimal if and only if
the Hankel matrix $H$ has rank $n$.
\label{prop:carlyle-paz}
\end{proposition}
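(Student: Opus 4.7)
The plan is to prove both directions of the biconditional by exploiting the factorisation $H = FB$ through $\F^n$. For the easy direction, I would show that $\rank(H) = n$ forces $\A$ to be minimal: given any equivalent automaton $\A'$ with $n'$ states, its Hankel matrix $H'$ satisfies $H'_{x,y} = \A'(xy) = \A(xy) = H_{x,y}$, so $H' = H$. But $\A'$ yields its own factorisation $H' = F'B'$ through $\F^{n'}$, whence $n = \rank(H) = \rank(H') \le n'$, giving minimality.

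For the reverse direction I would argue the contrapositive: if $\rank(H) < n$, then $\A$ admits an equivalent automaton of size strictly less than $n$. The first step is a short linear-algebraic sub-lemma asserting that $\rank(H) < n$ forces $\rank(F) < n$ or $\rank(B) < n$. Indeed, suppose $\rank(F) = \rank(B) = n$ and view $F$ as a linear map $\F^n \to \F^{\Sigma^*}$ acting on column vectors. Full column rank makes $F$ injective; full rank of $B$ makes its column space $\BV$ equal to all of $\F^n$; and the column space of $H = FB$ equals $F(\BV) = F(\F^n)$, which has dimension $n$, contradicting $\rank(H) < n$.

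Assume without loss of generality that $\rank(F) = r < n$ (the case $\rank(B) < n$ is dual, using the transpose/reverse automaton). I would then construct an equivalent automaton of size $r$ by observing that $\FV \subseteq \F^n$ contains $\valpha$ and is right-$M(\sigma)$-invariant for every $\sigma$, since $\valpha M(w) M(\sigma) = \valpha M(w\sigma) \in \FV$. Picking any basis $v_1,\ldots,v_r$ of $\FV$, I would expand each $v_i M(\sigma)$ in this basis to read off an $r \times r$ matrix $M'(\sigma)$, expand $\valpha$ to get $\valpha' \in \F^r$, and set $\veta' := (v_i \veta)_{i=1}^r \in \F^r$. A straightforward induction on $|w|$ shows that $\valpha' M'(w)$ is the coordinate vector of $\valpha M(w)$ in this basis, so $\valpha' M'(w) \veta' = \valpha M(w) \veta = \A(w)$ for every $w$, yielding an equivalent automaton with $r<n$ states and hence proving $\A$ non-minimal. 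The only non-routine step is the sub-lemma ruling out $\rank(F) = \rank(B) = n$ with $\rank(H) < n$; the basis-change construction itself is standard linear algebra once the invariance of $\FV$ is observed.
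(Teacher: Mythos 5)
Your proof is correct, but note that the paper does not prove this proposition at all: it is quoted as a known result of Carlyle and Paz~\cite{CarlyleP71}, so there is no in-paper argument to compare against. What you have written is essentially the standard proof of the Carlyle--Paz theorem, and in fact it establishes the sharper statement that the minimal number of states equals $\rank(H)$. Your first direction (any automaton computing the same series factors its Hankel matrix through $\F^{n'}$, so $n'\ge\rank(H)$) is the usual lower-bound argument, and it is sound because $H$ depends only on the weighted language. Your second direction is also sound: the sub-lemma that $\rank(F)=\rank(B)=n$ forces $\rank(H)=n$ is genuinely needed (the paper's stated inequality $\rank(H)\le\min\{\rank(F),\rank(B)\}$ only goes the wrong way) and your injectivity argument for it is fine, granting the standard fact that row rank equals column rank for these finite-width infinite matrices, which the paper itself uses freely. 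The basis-change construction you use to compress $\A$ to $\rank(F)$ states is exactly the forward reduction $\AF$ (dually $\AB$) that the paper introduces in Section~\ref{sub:minimisation} following Sch\"utzenberger, so your argument meshes well with the machinery the paper develops later; the only cosmetic difference is that the paper's minimisation goes through the double reduction $\ABF$ to reach size $\rank(H)$ in one pass, whereas for the contrapositive you only need a single reduction on whichever of $F$, $B$ is rank-deficient.
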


Using this result we show

\begin{theorem}
Deciding whether a $\mathbb{Q}$-weighted automaton is minimal is in
\textbf{NC}.
\end{theorem}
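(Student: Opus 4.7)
The plan is to reduce minimality to two rank computations on $n \times n$ rational matrices, each of which is then carried out in $\mathbf{NC}$ by vectorising a short linear recurrence.

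By Proposition~\ref{prop:carlyle-paz}, $\mathcal{A}$ is minimal iff $\mathrm{rank}(H) = n$. Since $H = FB$ where $F$ has $n$ columns and $B$ has $n$ rows, a short argument shows $\mathrm{rank}(H) = n$ precisely when $\mathrm{rank}(F) = \mathrm{rank}(B) = n$: if either rank is less than $n$ then $\mathrm{rank}(FB) \le \min(\mathrm{rank}(F),\mathrm{rank}(B)) < n$, whereas if both equal $n$ then $F$ is injective on $\mathbb{Q}^n$ and the image of $B$ is all of $\mathbb{Q}^n$, so $FB$ has image $F(\mathbb{Q}^n)$ of dimension $n$. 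Equivalently, $\mathcal{A}$ is minimal iff $\dim \FV = \dim \BV = n$; I describe the $\mathbf{NC}$ procedure for $\FV$, the case of $\BV$ being symmetric.

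By Proposition~\ref{prop:short}(i) the space $\FV$ is the row span of the (exponentially tall) matrix $F_{<n}$ whose rows are $\valpha M(w)$ for $w \in \Sigma^{<n}$. Its rank agrees with that of the $n \times n$ Gram matrix
\[
  G \;:=\; F_{<n}^T F_{<n} \;=\; \sum_{k=0}^{n-1} G_k,
  \qquad
  G_k \;:=\; \sum_{w \in \Sigma^k} M(w)^T \valpha^T \valpha\, M(w),
\]
the identity $\mathrm{rank}(X^T X) = \mathrm{rank}(X)$ being valid because $\mathbb{Q} \subseteq \mathbb{R}$. Splitting each $w \in \Sigma^k$ as $w'\sigma$ yields the recurrence $G_k = \sum_{\sigma\in\Sigma} M(\sigma)^T G_{k-1} M(\sigma)$, with $G_0 = \valpha^T \valpha$.

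Applying the standard identity $\mathrm{vec}(XYZ) = (Z^T \otimes X)\,\mathrm{vec}(Y)$, an immediate consequence of Proposition~\ref{prop:kronecker}, converts the recurrence into $\mathrm{vec}(G_k) = L\,\mathrm{vec}(G_{k-1})$, where $L := \sum_{\sigma\in\Sigma} M(\sigma)^T \otimes M(\sigma)^T$ is an $n^2 \times n^2$ matrix; hence $\mathrm{vec}(G) = \bigl(\sum_{k=0}^{n-1} L^k\bigr)\,\mathrm{vec}(\valpha^T \valpha)$. Iterated products of polynomially many polynomial-size matrices are computable in $\mathbf{NC}^2$~\cite{Cook85}, and the rank of an $n \times n$ rational matrix is in $\mathbf{NC}$, so $G$ and $\mathrm{rank}(G) = \dim\FV$ are obtained in $\mathbf{NC}$. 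The analogous procedure applied to $B_{<n} B_{<n}^T$, with recurrence $\tilde G_k = \sum_{\sigma} M(\sigma) \tilde G_{k-1} M(\sigma)^T$ and $\tilde G_0 = \veta\veta^T$, decides $\dim \BV = n$. I expect the main delicate point to be the equivalence $\mathrm{rank}(H) = n \Leftrightarrow \mathrm{rank}(F) = \mathrm{rank}(B) = n$; the rest is a routine assembly of well-known $\mathbf{NC}$ linear-algebra primitives.
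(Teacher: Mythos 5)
Your proposal is correct and follows essentially the same route as the paper: Carlyle--Paz, reduction to $\mathrm{rank}(F)=\mathrm{rank}(B)=n$, truncation to $\Sigma^{<n}$ via Proposition~\ref{prop:short}(i), and testing the Gram matrix $\Ft^T\Ft$ computed through the Kronecker-product identity, with iterated matrix sums/powers done in \textbf{NC}. The only cosmetic differences are that you phrase the computation as a vectorised recurrence $\mathrm{vec}(G_k)=L\,\mathrm{vec}(G_{k-1})$ rather than the paper's entrywise formula $(\valpha\otimes\valpha)\bigl(\sum_k(\sum_\sigma M(\sigma)\otimes M(\sigma))^k\bigr)(\ve_i\otimes\ve_j)$, and you test $\mathrm{rank}(G)=n$ where the paper tests $\det(\Ft^T\Ft)\neq 0$; both are standard \textbf{NC} primitives.
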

\begin{proof}
To check that a given automaton
$\mathcal{A}=(n,\Sigma,M,\valpha,\veta)$ is minimal it suffices to
verify that the associated Hankel matrix $H$ has rank $n$.  Since
$H=FB$, this holds if and only if the matrices $F$ and $B$ both have
rank $n$.  We show how to check that $F$ has rank $n$; the procedure
for $B$ is entirely analogous.

Let $\Ft$ be the sub-matrix of $F$ obtained by retaining only those
rows indexed by words in $\Sigma^{<n}$.  By
Proposition~\ref{prop:short}(i) we have
$\mathrm{rank}(F)=\mathrm{rank}(\Ft)$.  Thus

\begin{eqnarray*}
\mathrm{rank}(F)=n & \Leftrightarrow & \mathrm{rank}(\Ft)=n\\
                   & \Leftrightarrow & \mathrm{ker}(\Ft)=\{0\}\\
                   & \Leftrightarrow & \mathrm{ker}(\Ft^T \Ft) = \{0\}\\
                   & \Leftrightarrow & \mathrm{det}(\Ft^T \Ft) \neq 0 \, .
\end{eqnarray*}
The middle equivalence holds because for any vector $x \in
\mathbb{Q}^n$, $\Ft^T \Ft x =0$ implies $0=x^T \Ft^T \Ft x =
(\Ft x)^T\Ft x$, which in turn implies that $\Ft x=0$.

Since determinants can be computed in \textbf{NC} it only remains to
show that we can compute each entry of the $n \times n$ matrix $\Ft^T\Ft$
in \textbf{NC}.  Let $\ve_i \in \mathbb{Q}^n$ be the column vector with
$1$ in the $i$-th position and $0$ in all other positions.  Given $1
\leq i,j \leq n$ we have
\begin{eqnarray*}
(\Ft^T \Ft)_{ij} & = & \sum_{w \in \Sigma^{<n}} (\valpha M(w) \ve_i)
                                            (\valpha M(w) \ve_j)\\
             & = & \sum_{w \in \Sigma^{<n}} (\valpha \otimes \valpha)
                   (M(w) \otimes M(w)) (\ve_i \otimes \ve_j) \\
             & = & (\valpha \otimes \valpha) \left (
                   \sum_{k=0}^{n-1} \sum_{w \in \Sigma^k} \left (
                   M(w) \otimes M(w) \right ) \right )
                   (\ve_i \otimes \ve_j) \\
             & = & (\valpha \otimes \valpha) \left (
                   \sum_{k=0}^{n-1} \left(
                   \sum_{\sigma \in \Sigma} (M(\sigma) \otimes M(\sigma))
                   \right)^k \right )
                  (\ve_i \otimes \ve_j) \, .
\end{eqnarray*}
But this last expression can be computed in \textbf{NC} since sums and
matrix powers can be computed in \textbf{NC}~\cite{Cook85}.
\end{proof}

\subsection{Minimising an Automaton}

Next we give an \textbf{RNC} algorithm to minimise a given automaton.
The key idea is that we can compute a basis of the forward space
$\mathsf{F}$ by generating random vectors in the space.  We show that
a randomly generated set of such vectors of cardinality equal to the
dimension of $\mathsf{F}$ is likely to be a basis of $\mathsf{F}$. We
can likewise compute a basis of the backward space $\mathsf{B}$.  We
give the construction of the forward space; the proof for the backward
space is similar.

The construction involves an application of polynomial identity
testing in similar manner to Section~\ref{sub:cav-algorithm}.
Consider again a family of variables $\vx = \{ x_{\sigma,i} : \sigma
\in \Sigma,\; 1 \leq i \leq n\}$ and associate the monomial
$x_{w_1,1}x_{w_2,2}\ldots x_{w_k,k}$ with a word $w = w_1w_2 \ldots
w_k$.  Then we define the row vector $\rho(\vx)
\in \mathbb{Q}[\vx]^n$ by
\begin{gather}
\rho(\vx) := \sum_{k=0}^n \, \sum_{w \in \Sigma^{k}}
\valpha M(w) \cdot x_{w_1,1}x_{w_2,2}\ldots x_{w_k,k}
\, .
\end{gather}

Note that evaluating $\rho(\vx)$ at a vector of rationals $\vr = (
r_{\sigma,i} :\sigma \in \Sigma,\; 1 \leq i \leq n)$ yields a vector
$\rho(\vr)$ in the forward space $\mathsf{F}$.

\begin{proposition}
Let $\mathsf{U}$ be a proper subspace of $\mathsf{F}$ and let $K$ be a
positive integer.  Then for $\boldsymbol{r}$ chosen uniformly at
random from $\{1,\ldots,Kn\}^{\Sigma \times n}$ we have
$\Pr(\rho(\boldsymbol{r}) \in \mathsf{U}) \leq 1/K$.
\label{prop:errorbound}
\end{proposition}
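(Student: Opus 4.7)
The plan is to reduce the claim to a Schwartz–Zippel argument via a suitably chosen linear functional. Because $\mathsf{U}$ is a proper subspace of $\mathsf{F}$, by elementary linear algebra there exists a column vector $\vc \in \mathbb{Q}^n$ that annihilates every element of $\mathsf{U}$ but does not annihilate all of $\mathsf{F}$; equivalently, $\vc$ lies in the orthogonal complement of $\mathsf{U}$ but not in that of $\mathsf{F}$. I would fix such a $\vc$ at the outset.

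Next, define the univariate-valued polynomial
\[
Q(\vx) \;:=\; \rho(\vx)\,\vc \;=\; \sum_{k=0}^{n}\,\sum_{w\in\Sigma^{k}} \bigl(\valpha M(w)\,\vc\bigr)\,x_{w_1,1}\cdots x_{w_k,k}\,.
\]
Then $Q$ is a polynomial in the variables $\vx$ of total degree at most $n$. The key observation is that the monomials $x_{w_1,1}\cdots x_{w_k,k}$ corresponding to distinct words $w$ (of any length) are themselves distinct, since each variable $x_{\sigma,i}$ uniquely records both the symbol and its position. Consequently, the coefficient of the monomial attached to $w$ in $Q(\vx)$ is exactly the scalar $\valpha M(w)\,\vc$.

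Now I would argue that $Q\not\equiv 0$. If $Q$ were the zero polynomial, then $\valpha M(w)\,\vc = 0$ for every $w$ with $|w|\leq n$. In particular this holds for all $w\in\Sigma^{<n}$, and by Proposition~\ref{prop:short}(i) the vectors $\{\valpha M(w) : w\in\Sigma^{<n}\}$ span $\mathsf{F}$. Hence $\vc$ would annihilate all of $\mathsf{F}$, contradicting the choice of $\vc$. Therefore $Q\not\equiv 0$.

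Finally, observe that if $\rho(\vr)\in\mathsf{U}$, then by the defining property of $\vc$ we must have $Q(\vr) = \rho(\vr)\,\vc = 0$. Applying the Schwartz–Zippel lemma (Theorem~\ref{thm:schwartz-zippel}) to $Q$, with $\mathbb{S} = \{1,\ldots,Kn\}$ of size $Kn$ and $\deg Q\leq n$, gives
\[
\Pr\bigl(\rho(\vr)\in\mathsf{U}\bigr) \;\leq\; \Pr\bigl(Q(\vr)=0\bigr) \;\leq\; \frac{n}{Kn} \;=\; \frac{1}{K}\,,
\]
which is the desired bound. The main subtlety — and really the only nonroutine step — is verifying that $Q\not\equiv 0$, which hinges on the linear independence of the monomials indexed by distinct words together with Proposition~\ref{prop:short}(i) to conclude that words of length at most $n$ already span $\mathsf{F}$.
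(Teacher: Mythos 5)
Your proof is correct and follows essentially the same route as the paper: both arguments pick a linear functional that vanishes on $\mathsf{U}$ but not on $\mathsf{F}$ (the paper takes a non-zero $\vv \in \mathsf{F}$ orthogonal to $\mathsf{U}$), observe via the distinctness of the word-indexed monomials and Proposition~\ref{prop:short}(i) that the resulting scalar polynomial $\rho(\vx)\vc$ is non-zero of degree at most $n$, and conclude by the Schwartz--Zippel bound $n/(Kn)=1/K$. Your verification that $Q\not\equiv 0$ is just a slightly more explicit rendering of the step the paper dismisses as ``clearly non-zero for at least one $w$.''
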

\begin{proof}
Pick a non-zero vector $\vv \in \mathsf{F}$ that is orthogonal to
$\mathsf{U}$.  Notice that the polynomial $\rho(\boldsymbol{x})\vv^T$ is
non-zero since the coefficient of the monomial corresponding to a word
$w \in \Sigma^{<n}$ is $\valpha M(w) \vv^T$, and this is clearly non-zero
for at least one $w$.  Now $\rho(\boldsymbol{r}) \in \mathsf{U}$ only if
$\rho(\boldsymbol{r})\vv^T =0$.  Since $\rho(\boldsymbol{x})\vv^T$ has
degree at most $n$, it follows from Theorem~\ref{thm:schwartz-zippel}
that $\Pr(\rho(\boldsymbol{r}) \in \mathsf{U})$ is at most $1/K$.
\end{proof}

The procedure to generate a basis for the forward space $\mathsf{F}$
is shown in Figure~\ref{fig:forward-basis}.
\begin{figure}
\begin{minipage}{\textwidth}
\textbf{Algorithm}\;\textbf{Forward-Basis}\\[1ex]
\textbf{Input:} Automaton $\A = (n, \Sigma, M, \valpha,\veta)$
and error parameter $K$\\[2ex]
   for $i$ from $1$ to~$n$ do \\
  \ind choose a random vector $\vr^{(i)}
       \in \{1,2,\ldots,Kn\}^{\Sigma \times n}$ \\
  \ind $\vv_i$ := $\rho(\vr^{(i)})$\\
let $k$ be maximum such that $\{\vv_1,\ldots,\vv_k\}$
       is linearly independent\\
  return ``$\{\vv_1,\ldots,\vv_k\}$ is a basis of $\mathsf{F}$''
\end{minipage}
\caption{Algorithm for generating a basis of the forward space}
\label{fig:forward-basis}
\end{figure}
The algorithm \textbf{Forward-Basis} necessarily returns a linearly
independent set of vectors in the forward space.  It only fails to
output a basis if $\vv_{m+1} \in \Span\{\vv_1,\ldots,\vv_m\}$ for some
$m < \dim(\mathsf{F})$.  By Proposition~\ref{prop:errorbound} this
happens with probability at most $1/K$ for any given $m$, so the total
probability that \textbf{Forward-Basis} does not give a correct output
is at most $n/K$.  Thus, e.g., choosing $K=3n$ we have an error
probability of at most $1/3$.

It remains to observe that \textbf{Forward-Basis} can be made to run
in $O(\log^2 n)$ parallel time.  We perform the assignments $\vv_i
\mathrel{:=} \rho(\vr^{(i)})$ for $i=1,\ldots,n$ in parallel.  As
observed in Section~\ref{sub:cav-algorithm}, the computation of
$\rho(\vr^{(i)})$ involves an iterated matrix product, which can be
done in $O(\log^2 n)$ parallel time.  We also check linear
independence of $\{\vv_1,\ldots,\vv_k\}$ for $k=1,\ldots,n$ in
parallel.  Each check involves computing the rank of an $k\times n$
matrix, which can again be done in $O(\log^2 n)$ parallel
time~\cite{IbarraMR80}.

Given bases of $\mathsf{F}$ and $\mathsf{B}$, minimisation proceeds
via a classical construction of Sch\"{u}tzen\-berger~\cite{Schutzenberger}.
We briefly recall this construction and show that it
can be implemented in \textbf{NC} by making one call to algorithm
\textbf{Forward-Basis} and one call to the corresponding backward
version of this algorithm.

Let $\nF \in \N$ and $\FF \in \Q^{\nF \times n}$ be such that the rows
of~$\FF$ form a basis of the forward space~$\FV$, with the first row of
$\FF$ being $\valpha$.  Similarly, let $\nB \in \N$ and $\BB \in \Q^{n
  \times \nB}$ be such that the columns of~$\BB$ form a basis of the
backward space~$\BV$, with the first column of $\BB$ being $\veta$.
Since $\FV M(\sigma) \subseteq \FV$ and $M(\sigma) \BV \subseteq \BV$
for all $\sigma \in \Sigma$, there exist maps $\MF : \Sigma \to
\Q^{\nF \times \nF}$ and $\MB : \Sigma \to \Q^{\nB \times \nB}$ such
that
 \begin{equation}
  \FF M(\sigma) = \MF(\sigma) \FF \quad \text{and} \quad
  M(\sigma) \BB = \BB \MB(\sigma) \quad \text{for all $\sigma \in \Sigma$.}
\label{eq:equations}
 \end{equation}
Call $\AF := (\nF, \Sigma, \MF, \ve_1, \FF \veta)$ a \emph{forward
  reduction} of~$\A$ with base~$\FF$ and similarly $\AB := (\nB, \Sigma, \MB,
\valpha \BB, \ve_1^T)$ a \emph{backward reduction} of~$\A$ with base~$\BB$.


\begin{proposition}[~\cite{Schutzenberger}]
 Let $\A$ be an automaton.  Then $\ABF$ is minimal and equivalent
 to~$\A$.
\end{proposition}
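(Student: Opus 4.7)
The plan is to establish equivalence and minimality separately, in both cases exploiting the commutativity relations~\eqref{eq:equations}.

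\emph{Equivalence.} I would first extend~\eqref{eq:equations} to arbitrary words by induction on $|w|$, obtaining $\FF M(w) = \MF(w) \FF$ and $M(w) \BB = \BB \MB(w)$ for every $w \in \Sigma^*$. Combined with $\ve_1 \FF = \valpha$ and $\BB \ve_1^T = \veta$ (which hold by the choices of the first row of~$\FF$ and first column of~$\BB$), a one-line calculation gives $\AF(w) = \ve_1 \MF(w) \FF \veta = \valpha M(w) \veta = \A(w)$, and symmetrically $\AB(w) = \A(w)$. Applying the same forward-reduction argument to $\AB$ in place of~$\A$ then yields $\ABF \equiv \AB \equiv \A$.

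\emph{Minimality.} Equivalent automata share the same Hankel matrix~$H$, so by Proposition~\ref{prop:carlyle-paz} it suffices to prove that the number of states of $\ABF$ equals $\rank(H)$. The heart of the argument is to show that the backward reduction~$\AB$ has \emph{full} backward space, namely $\BV(\AB) = \Q^{\nB}$. Indeed, the extended relation above gives $M(w)\veta = \BB \MB(w) \ve_1^T$, so $\{\BB \MB(w) \ve_1^T : w \in \Sigma^*\}$ spans $\BV(\A)$; since $\BB$ has full column rank~$\nB$, it follows that $\{\MB(w) \ve_1^T : w \in \Sigma^*\}$ spans $\Q^{\nB}$.

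Consequently the backward matrix of~$\AB$ has full row rank~$\nB$. Factorising $H$ as (forward matrix of $\AB$) times (backward matrix of $\AB$) and invoking the standard fact that right-multiplication by a full-row-rank matrix preserves rank, I get $\rank(H) = \dim \FV(\AB)$. But $\dim \FV(\AB)$ is, by construction, the number of states of $\ABF$, so minimality follows from Proposition~\ref{prop:carlyle-paz}. The main obstacle is identifying and verifying this full-backward-space property of~$\AB$; once it is in hand, the rank computation and the appeal to Carlyle--Paz are routine.
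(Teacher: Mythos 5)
Your proof is correct. The paper itself gives no proof of this proposition---it is quoted from Sch\"utzenberger---and your argument is essentially the standard one: extending~\eqref{eq:equations} to words yields equivalence of $\AF$ and $\AB$ (hence of $\ABF$) with $\A$, and the observation that the backward space of $\AB$ is all of $\Q^{\nB}$ forces the number of states of $\ABF$ to equal $\rank(H)$, so minimality follows from Proposition~\ref{prop:carlyle-paz} together with the fact that equivalent automata have the same Hankel matrix.
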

%

\begin{theorem}
There is an \textbf{RNC} algorithm that transforms a given
automaton into an equivalent minimal automaton.
\end{theorem}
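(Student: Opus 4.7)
The plan is to implement Schützenberger's construction $\A \mapsto \ABF$ in \textbf{RNC}, invoking algorithm \textbf{Forward-Basis} and its backward analogue as subroutines.

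First I would use \textbf{Forward-Basis} to obtain (with high probability) a basis~$\FF \in \Q^{\nF \times n}$ of the forward space~$\FV$. A trivial modification of the algorithm ensures that the first row of~$\FF$ is~$\valpha$: simply force $\vv_1 := \valpha$ and draw the remaining $n-1$ random vectors as before; since $\valpha$ itself lies in~$\FV$, the same analysis bounds the failure probability by $(n-1)/K$. Running the backward dual in parallel yields a basis~$\BB \in \Q^{n \times \nB}$ of~$\BV$ with first column~$\veta$.

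Next I would compute the matrices $\MF(\sigma)$ and $\MB(\sigma)$ from equation~\eqref{eq:equations}. Since $\FF$ has full row rank~$\nF$, there is a column-permutation exhibiting an invertible $\nF \times \nF$ submatrix of~$\FF$; using this submatrix I can build a right inverse $\FF^+$ and set $\MF(\sigma) := \FF M(\sigma) \FF^+$. This amounts to solving $|\Sigma|$ linear systems over~$\Q$ in parallel, each of which lies in~\textbf{NC}. The same construction produces~$\MB(\sigma)$ from~$\BB$. I therefore obtain, in \textbf{RNC}, the forward reduction $\AF = (\nF, \Sigma, \MF, \ve_1, \FF\veta)$.

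Finally I would apply the same construction to~$\AF$, computing a basis of its backward space and then its backward reduction, yielding~$\ABF$ in \textbf{RNC}. By the cited proposition of Schützenberger, $\ABF$ is minimal and equivalent to~$\A$, which proves the theorem. The main subtlety is the first step: \textbf{Forward-Basis} must deliver a basis whose \emph{first} vector is~$\valpha$ (and dually for~$\BB$), because Schützenberger's construction puts $\ve_1$ in place of~$\valpha$ in~$\AF$; the modification described above handles this without changing the \textbf{RNC} complexity or more than marginally affecting the error bound, which can in any case be driven below any constant by taking $K$ large enough.
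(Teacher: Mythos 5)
Your proposal is correct and follows essentially the same route as the paper: generate a forward basis via \textbf{Forward-Basis} in \textbf{RNC}, recover the reduction's transition matrices by solving the linear systems \eqref{eq:equations} in \textbf{NC}, and then apply the backward version to $\AF$, invoking Sch\"{u}tzenberger's proposition that $\ABF$ is minimal and equivalent to $\A$. The only difference is that you spell out a detail the paper leaves implicit---arranging that the first basis vector be $\valpha$ (dually, the final vector)---and your modification handles it correctly without affecting the complexity or error analysis.
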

\begin{proof}
Let $\mathcal{A}=(n,\Sigma,M,\valpha,\veta)$ be an automaton.  We have
already shown that we can compute in randomised \textbf{NC} a
matrix $\FF$ whose rows form a basis of the forward space of $\A$.
Given $\FF$ we can compute the forward reduction $\AF$ in \textbf{NC}
since each transition matrix $\MF(\sigma)$ is uniquely defined as the
solution to the linear system of equations (\ref{eq:equations}).
Using the same reasoning we can compute $\ABF$ from $\AF$ in
randomised \textbf{NC}.  This is the minimal automaton that we seek.
\end{proof}

\section{Probabilistic Reward Automata} \label{sec:commutative}
In this section we consider \emph{probabilistic reward automata},
which extend Rabin's probabilistic automata~\cite{Rab63} with rewards
on transitions.  The resulting notion can be seen as a type of
partially observable Markov Decision Process~\cite{Bellman57}.  A
similar model has been investigated from the point of view of language
theory in~\cite{ChatterjeeDH09}.  Rewards are allowed to be negative,
in which case they can be seen as costs.  In Example~\ref{ex:rsa} we
use costs to record the passage of time in an encryption protocol.

A \emph{Probabilistic Reward Automaton} is a tuple $\A =
(n,s,\Sigma,M,R,\valpha,\veta)$, where $n \in \N$ is the number of
states; $s \in \N$ is the number of types of reward; $\Sigma$ is a
finite alphabet, $M(\sigma)$ is an $n \times n$ rational sub-stochastic
matrix for each $\sigma \in \Sigma$; $R(\sigma)$ is an $n \times n$
matrix with entries in $\{-1,0,1\}^s$ for each $\sigma \in \Sigma$;
$\valpha$ is an $n$-dimensional rational stochastic row vector;
$\veta$ is a rational $n$-dimensional column vector with all entries
lying in the interval $[0,1]$.  We think of $M(\sigma)$ as the
transition matrix, $R(\sigma)$ as the reward matrix, $\valpha$ as the
initial-state vector, and $\veta$ as the final-state vector.

The total reward of a run is the sum of the rewards along all its
transitions.  The expected reward of a word is the sum of the rewards
of all runs over that word, weighted by their respective probabilities.
Formally, given a word $w = w_1,\ldots,w_k$ and a path of states $p =
p_0,\ldots,p_k$, the probability and total reward of the path are
respectively defined by
\[
\Pr(p) \, = \, \valpha_{p_0} \left( \prod_{i=1}^{k}
                          M(\sigma)_{p_{i-1},p_i} \right ) \veta_{p_k}
\quad\mbox{and}\quad 
\mathrm{Reward}(p) \, = \, \sum_{i=1}^{k} R(w_i)_{p_{i-1},p_{i}} \, .
\]

The \emph{value} of the word $w$ is the expected reward over all runs:
\begin{gather}
\A(w) \, = \, \sum_{p \in \{1,\ldots,n\}^{k+1}}
\Pr(p) \cdot \mathrm{Reward}(p) \, .
\label{eq:average-reward}
\end{gather}

\subsection{Expectation Equivalence}
\label{sub:exp-equiv}
Two probabilistic reward automata $\A$ and $\B$ over the same alphabet
$\Sigma$ are defined to be \emph{equivalent in expectation} if
$\A(w)=\B(w)$ for all words $w \in \Sigma^*$.  In this section we give
a simple reduction of the equivalence problem for probabilistic reward
automata to the equivalence problem for $\mathbb{Q}$-weighted
automata.  The idea is to combine transition probabilities and rewards
in a single matrix.  Without loss of generality we consider automata
with a single type of reward; the general problem can be reduced to
this by considering each component separately.

Let $\A = (n,\Sigma,M,R,\valpha,\veta)$ be a probabilistic reward
automaton.  We define a $\mathbb{Q}$-weighted automaton $\B =
(2n,\Sigma,M',\valpha',\veta')$ such that $\A(w)=\B(w)$ for each word
$w \in \Sigma^*$.  First we introduce the following matrices:
\[ A = \left[\begin{array}{cc} 1 & 0
  \end{array}\right]
  \qquad E = \left[\begin{array}{c} 0\\1
  \end{array}\right] \qquad C = \left[ \begin{array}{cc} 0 & 1 \\
                         0 & 0
  \end{array}\right] \]
We also write $I_n$ for the $n\times n$ identity matrix.
Now we define
\begin{eqnarray*}
\valpha'  & := & \valpha \otimes A\\
   \veta' & := & \veta \otimes  E\\
M'(\sigma) & := & (M(\sigma) \otimes I_2) + ((M(\sigma) \odot R(\sigma)) \otimes C)
\end{eqnarray*}
where $\otimes$ denotes Kronecker product and $\odot$ denotes Hadamard
product (cf.~Section~\ref{sub:linear-algebra}).

\begin{proposition}
$\A(w)=\B(w)$ for all words $w \in \Sigma^*$.
\label{prop:reduction}
\end{proposition}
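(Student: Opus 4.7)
The plan is to read the $2n$-dimensional state space of $\B$ as encoding pairs $(i,b)$ with $i \in \{1,\ldots,n\}$ and $b \in \{1,2\}$, where the second coordinate tracks whether a reward has been ``committed'' yet on the current run. The initial vector $\valpha' = \valpha \otimes A$ starts all the mass on $(i,1)$-states, while the final vector $\veta' = \veta \otimes E$ projects onto $(i,2)$-states, so only runs that switch from component~1 to component~2 exactly once contribute to $\B(w)$. The matrix $M(\sigma) \otimes I_2$ keeps the second coordinate fixed, and the matrix $(M(\sigma) \odot R(\sigma)) \otimes C$ performs the switch $1 \mapsto 2$, accumulating the reward $R(\sigma)_{i,j}$ as a multiplicative factor at the moment of the switch.

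First I would expand $\B(w) = \valpha' M'(w_1) \cdots M'(w_k) \veta'$ by distributing each factor $M'(w_i) = M(w_i) \otimes I_2 + (M(w_i) \odot R(w_i)) \otimes C$. This produces $2^k$ summands, each of the form $(\valpha \otimes A)\bigl(\prod_{i=1}^k (N_i \otimes X_i)\bigr)(\veta \otimes E)$, where independently for each $i$ one has $(N_i,X_i) \in \{(M(w_i),I_2),\,(M(w_i) \odot R(w_i),C)\}$. Applying Proposition~\ref{prop:kronecker} repeatedly collapses each such summand into the product of two scalars
\[
(\valpha N_1 \cdots N_k \veta) \cdot (A X_1 \cdots X_k E).
\]

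Next I would observe the ``scalar filter'' coming from the $2 \times 2$ factor: since $C^2 = 0$, $I_2 C = C I_2 = C$, $A I_2^{\,k} E = 0$, and a direct check gives $A X_1 \cdots X_k E = 1$ whenever exactly one $X_j = C$. Hence only the $k$ summands in which exactly one $X_j$ is $C$ survive, and
\[
\B(w) \;=\; \sum_{j=1}^k \valpha\, M(w_1) \cdots M(w_{j-1}) \bigl(M(w_j) \odot R(w_j)\bigr) M(w_{j+1}) \cdots M(w_k)\, \veta.
\]

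Finally I would expand this sum entrywise over state sequences $p_0,\ldots,p_k$ and regroup: the $j$-th summand contributes $\sum_p \Pr(p)\,R(w_j)_{p_{j-1},p_j}$, so summing over $j$ yields $\sum_p \Pr(p) \sum_{j=1}^k R(w_j)_{p_{j-1},p_j} = \sum_p \Pr(p)\cdot\mathrm{Reward}(p)$, which is exactly $\A(w)$ by definition~\eqref{eq:average-reward}. The only delicate point is the bookkeeping in the expansion and in the repeated application of Proposition~\ref{prop:kronecker}, but since the $2\times 2$ block has a very simple nilpotent structure this presents no real obstacle, only care.
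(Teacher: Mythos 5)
Your proof is correct and follows essentially the same route as the paper: the identity you reach after the scalar filter, $\B(w)=\sum_{j=1}^k \valpha\, M(w_1)\cdots (M(w_j)\odot R(w_j))\cdots M(w_k)\,\veta$, is exactly the paper's intermediate formula \eqref{eq:IH} contracted with $\valpha'$ and $\veta'$, and both arguments rest on the same ingredients --- the mixed-product property (Proposition~\ref{prop:kronecker}), $C^2=0$, $AE=0$, $ACE=1$, and a final entrywise regrouping by distributivity. The only difference is organizational: you distribute the $k$-fold product directly and discard summands with two or more $C$-factors, whereas the paper establishes the same expression for $M'(w)$ by induction on $w$.
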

\begin{proof}
We show by induction that for all words $w \in \Sigma^*$ we have
\begin{gather}
M'(w) = (M(w) \otimes I_2) +
\Bigg(\sum_{\substack{w',w''\\w=w' a w''}} (M(w')(M(a) \odot R(\sigma))M(w'')) \otimes C \Bigg) \, .
\label{eq:IH}
\end{gather}
The base case, $w = \varepsilon$, is clear.  For the induction step we have
\begin{eqnarray*}
M'(w\sigma) & = & M'(w)M'(\sigma) \\
            & = & (M(w)\otimes I_2)(M(\sigma)\otimes I_2)+
                  (M(w)\otimes I_2)((M(\sigma)\odot R(\sigma))\otimes C)\\
            &   & + \Bigg(\sum_{\substack{w',w''\\w=w' a w''}} (M(w')(M(a) \odot R(\sigma))M(w'')) \otimes C\Bigg)
                    (M(\sigma)\otimes I_2)\\
            &  &  + \Bigg(\sum_{\substack{w',w''\\w=w' a w''}} (M(w')(M(a) \odot R(\sigma))M(w'')) \otimes C\Bigg)
                  ((M(\sigma)\odot R(\sigma))\otimes C)
\end{eqnarray*}
But using Proposition~\ref{prop:kronecker} and the identity $C^2=0$, the above expression simplifies to
\[ (M(w\sigma)\otimes I_2) +
\Bigg(\sum_{\substack{w',w''\\w\sigma =w' a w''}} (M(w')(M(a) \odot R(\sigma))M(w'')) \otimes C\Bigg) \, .\]
This completes the induction step.

Using Proposition~\ref{prop:kronecker} and the fact that $AE=0$ and $ACE=I_1$ it follows from (\ref{eq:IH}) that
\begin{align*}
\B(w) = \valpha' M'(w) \veta'  & =
        \sum_{\substack{w',w''\\w=w' a w''}} \valpha(M(w')(M(a)\odot R(a))M(w''))\veta\\
   & = \sum_{i=1}^k \sum_{p \in \{1,\ldots,n\}^{k+1}} \alpha_{p_0} \left( \prod_{j=1}^k M(w_j)_{p_{j-1},p_j} \, R(w_i)_{p_{i-1},p_i} \right) \eta_{p_k}\, .
\end{align*}
But the equivalence of the above expression and
(\ref{eq:average-reward}) follows from distributivity of
multiplication over addition.
\end{proof}

\begin{corollary}
Expectation equivalence of probabilistic reward automata can be
decided in \textbf{NC}.  Moreover there is an \textbf{RNC} procedure
that determines whether or not two automata are equivalent and outputs
a word on which they differ in case they are inequivalent.
\end{corollary}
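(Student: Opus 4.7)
My plan is to derive this corollary directly from Proposition~\ref{prop:reduction} together with the equivalence and counterexample-generation results already established for $\mathbb{Q}$-weighted automata. Given two probabilistic reward automata $\mathcal{A}$ and $\mathcal{A}'$ over alphabet $\Sigma$ with $s$ reward types, I would first reduce to the single-reward case by treating each of the $s$ coordinates independently: two automata are expectation equivalent precisely when, for each $1\le \ell \le s$, the scalar automata obtained by projecting $R$ onto its $\ell$-th component are equivalent. There are only polynomially many such components, so this splitting is harmless for either \textbf{NC} or \textbf{RNC}.

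For each fixed reward component I would then apply the construction of Proposition~\ref{prop:reduction} to produce $\mathbb{Q}$-weighted automata $\mathcal{B}$ and $\mathcal{B}'$ of sizes $2n$ and $2n'$ respectively, with $\mathcal{B}(w)=\mathcal{A}(w)$ and $\mathcal{B}'(w)=\mathcal{A}'(w)$ for all $w$. It is straightforward to check that this construction is in fact in \textbf{NC} (even in \textbf{L}): each entry of $M'(\sigma)$ is given by an explicit formula involving a Kronecker product and a Hadamard product, and both $\valpha'$ and $\veta'$ are single Kronecker products with the fixed $1\times 2$ matrices $A$ and $E$. Thus forming all the reduced automata in parallel uses only $\mathbf{NC}$ resources.

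Now by Proposition~\ref{prop:reduction}, $\mathcal{A}$ and $\mathcal{A}'$ are expectation equivalent if and only if $\mathcal{B}$ and $\mathcal{B}'$ are equivalent for every reward component. Using the deterministic \textbf{NC} algorithm for equivalence of $\mathbb{Q}$-weighted automata referenced in the introduction (Tzeng's algorithm), I can test all $s$ of these equivalences in parallel and take the conjunction of the outcomes, yielding an \textbf{NC} decision procedure for expectation equivalence.

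For the counterexample claim, I would instead invoke Theorem~\ref{thm:rnc-equiv}: if $\mathcal{A}$ and $\mathcal{A}'$ are inequivalent then for some component $\ell$ the weighted automata $\mathcal{B}$ and $\mathcal{B}'$ disagree on some word $w$, and the \textbf{RNC} procedure of Theorem~\ref{thm:rnc-equiv} produces such a $w$; since $\mathcal{B}(w)\neq \mathcal{B}'(w)$ iff $\mathcal{A}(w)$ and $\mathcal{A}'(w)$ differ in the $\ell$-th reward coordinate, this $w$ is also a witness of inequivalence for the original automata. Running all $s$ such procedures in parallel and returning any witness found gives the desired \textbf{RNC} algorithm. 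There is no real obstacle here: the work is essentially bookkeeping to verify that the reduction of Proposition~\ref{prop:reduction} preserves the complexity bounds and that the separation into reward coordinates is compatible with both the \textbf{NC} decision procedure and the \textbf{RNC} counterexample procedure.
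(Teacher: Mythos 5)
Your proposal is correct and follows essentially the same route as the paper: combine Proposition~\ref{prop:reduction} with Tzeng's \textbf{NC} equivalence algorithm for the decision part and with Theorem~\ref{thm:rnc-equiv} for the \textbf{RNC} counterexample-producing part. The componentwise splitting over reward types and the observation that the reduction itself is \textbf{NC}-computable are exactly the (implicit) bookkeeping the paper relies on.
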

\begin{proof}
The first part follows by combining Proposition~\ref{prop:reduction}
with the \textbf{NC} algorithm for $\mathbb{Q}$-weighted automaton
equivalence in~\cite{Tzeng96}.  The second part follows by combining
Proposition~\ref{prop:reduction} with Theorem~\ref{thm:rnc-equiv}.
\end{proof}

\subsection{Distribution Equivalence}

Two probabilistic reward automata are called \emph{distribution equivalent}
if they induce identical distributions on rewards for each input word
$w \in \Sigma^*$.  We formalise this notion by translating
probabilistic reward automata into $\Q$-weighted automata over the field
$\mathbb{F}=\mathbb{Q}(t_1,t_1^{-1},\ldots,t_s,t_s^{-1})$ of rational
Laurent functions, as defined in Section~\ref{sec:prelim}.  We
consider $\varepsilon$-transitions in this section because they are
convenient for applications (cf.~Example~\ref{ex:geom}) and because we
cannot rely on existing $\varepsilon$-elimination results in the
presence of rewards.

Let $\A=(n,s,\Sigma,M,R,\valpha,\veta)$ be a probabilistic reward
automaton, where $\varepsilon \in \Sigma$.  To make
$\varepsilon$-elimination more straightforward, we assume that the
transition matrix $M(\varepsilon)$ has no recurrent states, i.e., that
its spectral radius is strictly less than one.  We now define an
$\mathbb{F}$-weighted automaton $\A' = (n,\Sigma,M',\valpha,\veta)$ as
follows.  For $1 \leq i,j \leq n$, let $M'(\sigma)_{i,j} =
at_1^{k_1}\ldots,t_s^{k_s}$, where $M(\sigma)_{i,j}=a$ and
$R(\sigma)_{i,j}=(k_1,\ldots,k_s)$.
We extend $M'$ to a map $M' : \Sigma^* \to \mathbb{F}^{n \times n}$ by
defining
\begin{gather}
M'(w) := M'(\varepsilon)^* M'(w_1) M'(\varepsilon)^*
           \cdots M'(w_m) M'(\varepsilon)^* \, 
\label{eq:prod-def}
\end{gather}
for a word $w=w_1\ldots w_m$.  Our convention on
$\varepsilon$-transitions implies that
$\mathrm{det}(I-M'(\varepsilon)) \neq 0$ and therefore, by
Proposition~\ref{prop:degree-bound}, that $M'(\varepsilon)^*$ is
well-defined and has entries whose numerators and denominators are
Laurent polynomials with degree bound $sn$.  It follows that the
entries of $M'(w)$ have degree bound $(sn+1)m$.

Two probabilistic reward automata $\B, \C$ over the same
alphabet~$\Sigma$ and with the same number of reward types are said to
be \emph{equivalent} if the corresponding $\mathbb{F}$-weighted
automata $\B'$ and $\C'$ are equivalent, i.e., $\B'(w)=\C'(w)$ for all
words $w \in \Sigma^*$.  Now Proposition~\ref{prop:short} implies that
equivalence for $\mathbb{F}$-weighted automata is decidable, but the
algorithms of Sch\"{u}tzenberger~\cite{Schutzenberger} and
Tzeng~\cite{Tzeng} do not yield polynomial-time procedures in our case
because the complexity of solving systems of linear equations over the
field $\mathbb{Q}(t_1,t_1^{-1},\ldots,t_s,t_s^{-1})$ is not polynomial
in $s$ (indeed the solution need not have length exponential in $s$).
However, it not difficult to give a randomised polynomial-time
algorithm to decide equivalence of probabilistic reward automata.

Let $\A'$ be the $\mathbb{F}$-weighted automaton corresponding to a
probabilistic reward automaton $\A$ with $n$ states.  For each word $w
\in \Sigma^*$ of length at most $n$ we have a rational function
$\A'(w)$ whose numerator and denominator are polynomials of degree at
most $d:=(sn+1)n$, as observed above.  Now consider the set
$R:=\{1,2,\ldots,2d\}^s$.  Suppose that we pick $\vr \in R$ uniformly
at random.  Denote by $\A'(w)(\vr)$ the result of substituting $\vr$
for the formal variables $t_1,\ldots,t_s$ in the rational function
$\A'(w)$.  Clearly if $\A'$ is a zero automaton then $\A'(w)(\vr)=0$
for all $\vr\in R$.  On the other hand, if $\A'$ is non-zero then by
Proposition~\ref{prop:short} there exists a word $w\in\Sigma^*$ of
length at most $n$ such that $\A'(w)\not\equiv 0$.  Since the degree
of the rational expression $A'(w)$ is at most $d$ it follows from the
Schwartz-Zippel theorem~\cite{DL78,Schwartz80,Zippel79} that the
probability that $\A(w)(\vr)=0$ is at most $1/2$.

Thus our randomised procedure is to pick $\vr \in R$ uniformly at
random and to check whether $\A(w)(\vr)=0$ for some $w \in \Sigma^*$.
To perform this final check we show that there is a $\Q$-weighted
automaton $\B$ such that $\A'(w)(\vr)=\B(w)$ for all $w \in \Sigma^*$.
Then check $\B$ for zeroness using, e.g., Tzeng's
algorithm~\cite{Tzeng}.  The automaton $\B$ has the form $\B =
(\ns{\B}, \Sigma, \Ms{\B}, \alphas{\B}, \etas{\B})$, where
$\ns{\B}=n$, $\alphas{\B}=\valpha$, $\etas{\B}=\veta$ and
$\Ms{\B}(\sigma) = M(\sigma)(\vr)$ for all $\sigma \in \Sigma$.

\begin{theorem}
There is an \textbf{RNC} procedure that determines whether or not two
probabilistic reward automata are distribution equivalent, and which
outputs a word on which they differ in case they are inequivalent.
\end{theorem}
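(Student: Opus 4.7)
The plan is to combine the randomised substitution strategy already sketched in this section with the \textbf{RNC} zeroness-plus-counterexample procedure of Theorem~\ref{thm:rnc-equiv}. First, given two probabilistic reward automata $\A_1$ and $\A_2$ over the same alphabet and with the same number of reward types, I would translate them into $\mathbb{F}$-weighted automata $\A_1'$ and $\A_2'$ and reduce distribution equivalence to a single zeroness question for an $\mathbb{F}$-weighted automaton $\A'$ obtained via the direct-sum construction given at the start of Section~\ref{sec:rand} (that construction is purely syntactic and trivially parallelisable, and it works verbatim over the field $\mathbb{F}$).

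Next I would draw $\vr \in R = \{1,2,\ldots,2d\}^s$ uniformly at random and substitute $\vr$ for $t_1,\ldots,t_s$ to obtain a $\Q$-weighted automaton $\B$ satisfying $\B(w) = \A'(w)(\vr)$ for every word~$w$. The only nontrivial part of assembling~$\B$ is evaluating $M'(\varepsilon)^*(\vr) = (I - M'(\varepsilon)(\vr))^{-1}$; since matrix inversion over~$\Q$ lies in \textbf{NC}~\cite{Cook85}, the transition matrices and the initial/final vectors of~$\B$ can all be produced in \textbf{NC}, in parallel over $\sigma \in \Sigma$. With $\B$ in hand, I would invoke Theorem~\ref{thm:rnc-equiv} to decide zeroness of~$\B$ in \textbf{RNC} and, in the non-zero case, to output a word~$w$ with $\B(w) \neq 0$.

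Correctness follows from the Schwartz-Zippel calculation already presented in this section: if $\A'$ is zero then so is $\B$, while if $\A'$ is non-zero then with probability at least~$1/2$ some $w \in \Sigma^{<n}$ satisfies $\A'(w)(\vr) \neq 0$ and hence $\B(w) \neq 0$. Any word~$w$ returned by the inner procedure satisfies $\A'(w)(\vr) = \B(w) \neq 0$, whence $\A'(w) \not\equiv 0$ as an element of~$\mathbb{F}$; this certifies $\A_1'(w) \neq \A_2'(w)$ and therefore yields a word on which the reward distributions induced by~$\A_1$ and~$\A_2$ differ. Standard parallel amplification boosts the success probability to any desired constant.

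The step I expect to be most delicate is guaranteeing that the substitution is well-defined: one must ensure $\mathrm{det}(I - M'(\varepsilon)(\vr)) \neq 0$ on the random draw so that the matrix inverse above actually exists. The hypothesis that $M(\varepsilon)$ has spectral radius strictly less than one makes $\mathrm{det}(I - M'(\varepsilon))$ a non-zero Laurent polynomial, and Proposition~\ref{prop:degree-bound} bounds its degree; a further invocation of Theorem~\ref{thm:schwartz-zippel} on a range~$R$ of sufficient size keeps the failure probability small, and resampling upon failure preserves the overall \textbf{RNC} bound.
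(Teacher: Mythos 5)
Your proposal follows essentially the same route as the paper: translate to automata weighted over the field of rational Laurent functions, reduce distribution equivalence to zeroness of a difference automaton, substitute a random point from $\{1,\ldots,2d\}^s$ (Schwartz--Zippel), eliminate $\varepsilon$-transitions by inverting $I-M'(\varepsilon)$ to obtain a $\Q$-weighted automaton, and then invoke the \textbf{RNC} equivalence-with-counterexample procedure of Theorem~\ref{thm:rnc-equiv}. Your explicit treatment of the possible vanishing of $\mathrm{det}(I-M'(\varepsilon))$ at the sampled point is a reasonable extra precaution that the paper leaves implicit, and it does not change the argument.
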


\begin{example}
We consider probabilistic programs that randomly increase and decrease
a single counter (initialised with~$0$) so that upon termination the
counter has a random value~$X \in \Z$.  The programs should be such
that $X$ is a random variable with $X = Y-Z$ where $Y$ and~$Z$ are
independent random variables with a geometric distribution with
parameters $p = 1/2$ and $p=1/3$, respectively.  (By that we mean that
$\Pr(Y=k) = (1-p)^k p$ for $k \in \{0, 1, \ldots\}$, and similarly for
$Z$.)  Figure~\ref{fig:ex-geom-dist-code} shows code in the syntax of
the \apex{} tool~\cite{12KMOWW:CAV}.
\begin{figure}
\begin{center}
\begin{tabular}{cc}
\begin{minipage}[t]{0.5\linewidth}
\begin{verbatim}
inc:com, dec:com |-
  var%2 flip;
  flip := 0;
  while (flip = 0) do {
    flip := coin[0:1/2,1:1/2];
    if (flip = 0) then {
     inc;
    };
  };
  flip := 0;
  while (flip = 0) do {
    flip := coin[0:2/3,1:1/3];
    if (flip = 0) then {
     dec;
    };
  }
:com
\end{verbatim}
\end{minipage}
&

\begin{minipage}[t]{0.5\linewidth}
\begin{verbatim}
inc:com, dec:com |-
  var%2 flip;
  flip := coin[0:1/2,1:1/2];
  if (flip = 0) then {
    while (flip = 0) do {
      flip := coin[0:1/2,1:1/2];
      if (flip = 0) then {
       inc;
      };
    };
  } else {
    flip := 0;
    while (flip = 0) do {
      dec;
      flip := coin[0:2/3,1:1/3];
    };
  }
:com
\end{verbatim}
\end{minipage}
\end{tabular}
\end{center}
\caption{Two \apex{} programs for producing a counter that is distributed as the difference between two geometrically distributed random variables.}
\label{fig:ex-geom-dist-code}
\end{figure}

The program on the left consecutively runs two while loops: it first
increments the counter according to a geometric distribution with
parameter~$1/2$ and then decrements the counter according to a
geometric distribution with parameter~$1/3$, so that the final counter
value is distributed as desired.  The program on the right is more
efficient in that it runs only one of two while loops, depending on a
single coin flip at the beginning.  It may not be obvious though that
the final counter value follows the same distribution as in the left
program.  We used the \apex{} tool to translate the programs to the
probabilistic reward automata $\B$ and~$\C$ shown in
Figure~\ref{fig:ex-geom-dist-aut}.  Here each counter increment
corresponds to a reward of $1$ and each counter decrement to a reward
of $-1$.
\begin{figure}
 \begin{tabular}{c@{\hspace{20mm}}c}
 \begin{tikzpicture}[shorten >=1pt,baseline=0]
  \node[state] (1) at (0,0) {$1, \frac16$};
  \node[state] (2) at (3,0) {$2, \frac13$};
  \draw[->] (-0.8,0) -- (1);
  \draw[->] (1) .. controls +(250:1.4cm) and +(290:1.4cm) .. node[above] {$\varepsilon$} node[below,yshift=1mm] {$\frac12$ : inc} (1);
  \draw[->] (2) .. controls +(250:1.4cm) and +(290:1.4cm) .. node[above] {$\varepsilon$} node[below,yshift=1mm] {$\frac23$ : dec} (2);
  \draw[->] (1) to node[above] {$\varepsilon$} node[below] {$\frac13$ : dec} (2);
 \end{tikzpicture}
 &
 \begin{tikzpicture}[shorten >=1pt,baseline=0]
  \node[state] (1) at (0,0) {$1, \frac14$};
  \node[state] (2) at (3,1) {$2, \frac12$};
  \node[state] (3) at (3,-1) {$3, \frac13$};
  \draw[->] (-0.8,0) -- (1);
  \draw[->] (1) to node[above] {$\varepsilon$} node[below,sloped] {$\frac14$ : inc} (2);
  \draw[->] (2) .. controls +(110:1.4cm) and +(70:1.4cm) .. node[above] {$\varepsilon$} node[below,yshift=0.5mm] {$\frac12$ : inc} (2);
  \draw[->] (1) to node[above] {$\varepsilon$} node[below,sloped] {$\frac12$ : dec} (3);
  \draw[->] (3) .. controls +(250:1.4cm) and +(290:1.4cm) .. node[above] {$\varepsilon$} node[below,yshift=0.5mm] {$\frac23$ : dec} (3);
 \end{tikzpicture}
 \\
 ($\B$) & ($\C$)
 \end{tabular}
\caption{Automata produced from the code in
  Figure~\ref{fig:ex-geom-dist-code}.  The states are labelled with
  their number and their ``acceptance probability'' ($\veta$-weight).
  In both automata, state~1 is the only initial state ($\valpha_1 = 1$
  and $\valpha_i = 0$ for $i \ne 1$).  The transitions are labelled
  with the input symbol~$\varepsilon$, with a probability (weight) and
  a cost.}
\label{fig:ex-geom-dist-aut}
\end{figure}
%
Since the input alphabets are empty,
 it suffices to consider the input word~$\varepsilon$ when comparing $\B$ and~$\C$ for equivalence.
If we construct the difference automaton $\A = (5,1,\emptyset,M,\valpha,\veta)$
 and invert the matrix of polynomials $I-M(\varepsilon)$, we obtain
 \[
  \A(\varepsilon)(x) =
    \left(\frac{2}{x-2}, \frac{2}{(3x-2)(x-2)}, 1, \frac{-x}{2(x-2)}, \frac{3}{2(3x-2)}  \right) \veta \equiv 0\,,
 \]
 which proves equivalence of $\B$ and~$\C$.
Notice that the actual algorithm would not compute $\A(\varepsilon)(x)$ as a polynomial,
 but it would compute $\A(\varepsilon)(r)$ only for a few concrete values $r \in \Q$.
\label{ex:geom}
\end{example}

\begin{example}
RSA~\cite{RSA} is a widely-used cryptographic algorithm.  Popular
implementations of the RSA algorithm have been shown to be vulnerable
to timing attacks that reveal private keys~\cite{Kocher96,BrumleyB05}.
The preferred countermeasures are blinding techniques that randomise
certain aspects of the computation, which are described in,
e.g.,~\cite{Kocher96}.  We model the timing behaviour of the RSA
algorithm using probabilistic cost automata, where costs encode
time. These automata are produced by \apex{}, and are then used to
check for timing leaks with and without blinding.

At the heart of RSA decryption is a modular exponentiation,
which computes the value $m^d \mod N$ where $m\in \{0,\ldots,N-1\}$ is the encrypted message, $d\in \mathbb N$ is the private
decryption exponent and $N\in \mathbb N$ is a modulus. An attacker wants to find out $d$.
We model RSA decryption in \apex{} by implementing modular exponentiation
by iterative squaring (see Figure~\ref{fig:code-rsa}).
We consider the situation where the attacker is able
to control the message $m$, and tries to derive $d$ by observing the runtime distribution
over different messages $m$.
Following~\cite{Kocher96} we assume that the running time of multiplication depends on the operand values
(because a source-level multiplication typically corresponds to a cascade of processor-level multiplications).
By choosing the `right' input message $m$, an attacker can observe which private keys are most likely.

We consider two blinding techniques mentioned in Kocher~\cite{Kocher96}.
The first one is base blinding, i.e., the message is multiplied by $r^d$ before exponentiation where $d$ is a random number,
which gives a result that can be fixed by dividing by $r$ but makes it impossible for the attacker to control the
basis of the exponentiation.
The second one is exponent blinding, which adds a multiple of the group order $\varphi(N)$ of $\mathbb Z / N \mathbb Z$
to the exponent, which doesn't change the result of the exponentiation\footnote{Euler's totient function $\varphi$ satisfies $a^{\varphi(N)} \equiv 1 \mod N$ for all $a\in \mathbb Z$.}
but changes the timing behaviour.

Figure~\ref{fig:ex-rsa} shows the automaton
for $N=10$, and private key $0,1,0,1$ with message blinding enabled.
The \apex{} program is given in Figure~\ref{fig:code-rsa}.

We investigate the effectiveness of blinding.
Two private keys are indistinguishable if the resulting automata are equivalent.
The more keys are indistinguishable the safer the algorithm.
We analyse which private keys are identified by plain RSA, RSA with a blinded
message and RSA with blinded exponent.

For example, in plain RSA, the following keys $0,1,0,1$ and $1,0,0,1$
are indistinguishable, keys $0,1,1,0$ and $0,0,1,1$ are
indistinguishable with base blinding, lastly $1,0,0,1$ and $1,0,1,1$
are equivalent only with exponent blinding.  Overall 9 different keys
are distinguishable with plain RSA, 7 classes with base blinding and 4
classes with exponent blinding.
\begin{figure}
  \begin{center}
\begin{verbatim}
const N := 10;    // modulus
const Bits := 4 ; // number of bits of the key

m :int%N, inc:com |-
var%2 exponent[Bits] = [0,1,0,1];
com power(x:int%N) {
   var%N s := 1;
   var%N R;
   for(var%(Bits + 1) k; k < Bits; ++k) do {
      R:=s;
      if(exponent[k]) then {
         R := R*x;
         if(5<=R) then { inc; inc } else { inc }
      }
      s := R*R;
   }
}
var%N message := m*rand[N]; // blinding
power(message) : com
\end{verbatim}
\end{center}
\caption{\apex{} code for RSA.}
\label{fig:code-rsa}
\end{figure}
\begin{figure}
  \begin{center}
  \begin{tabular}{c}
  \includegraphics[scale=0.75, angle=270]{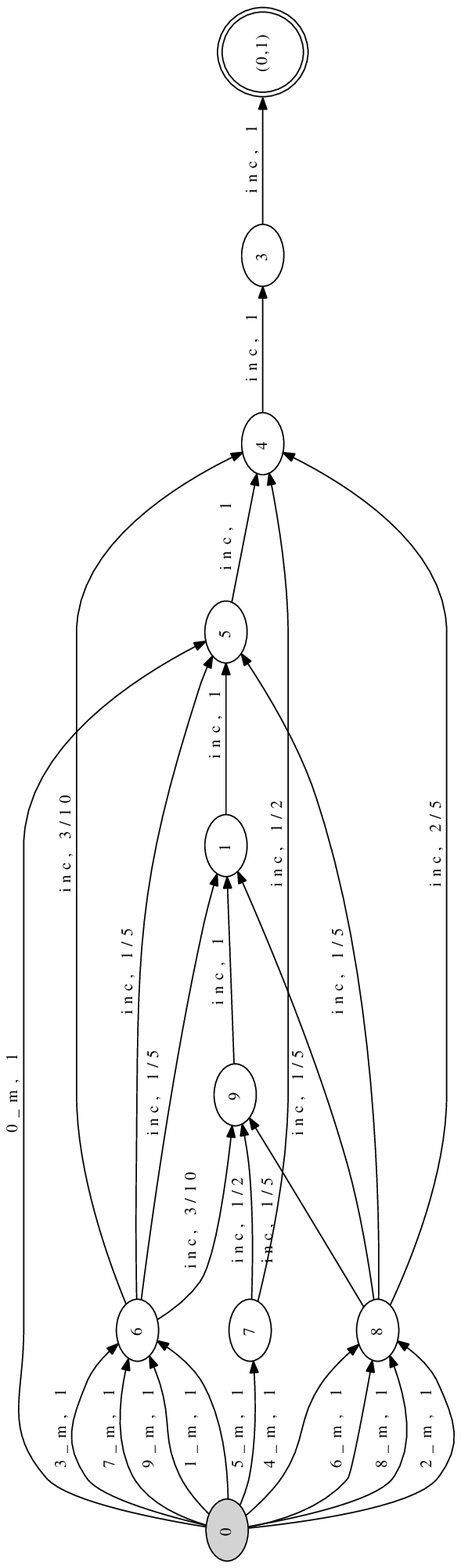}
\end{tabular}
\end{center}
\caption{Modeling RSA decryption with \apex{}.}
\label{fig:ex-rsa}
\end{figure}

\label{ex:rsa}
\end{example}

\section{Pushdown Automata and Arithmetic Circuits}
\label{sec:vpa}

In a visibly pushdown automaton~\cite{AM04} the stack operations are
determined by the input word.  Consequently VPA have a more tractable
language theory than ordinary pushdown automata.  The main result of
this section shows that the equivalence problem for $\Q$-weighted VPA
is logspace equivalent to the problem \textbf{ACIT} of determining
whether a polynomial represented by an arithmetic circuit is
identically zero.

A \emph{visibly pushdown alphabet} $\Sigma = \Sigma_c \cup \Sigma_r \cup
\Sigma_{\mathit{int}}$ consists of a finite set of \emph{calls}
$\Sigma_c$, a finite set of \emph{returns} $\Sigma_r$, and a finite
set of \emph{internal actions} $\Sigma_{\mathit{int}}$.
A visibly pushdown automaton over alphabet $\Sigma$ is
restricted so that it pushes onto the stack when it reads a call,
pops the stack when it reads a return, and leaves the stack untouched
when reading internal actions.  Due to this restriction visibly
pushdown automata only accept words in which calls and returns are
appropriately matched.  Define the set of \emph{well-matched words} to be
$\bigcup_{i \in \mathbb{N}}L_i$, where
$L_0  =  \Sigma_{\mathit{int}}+\{\varepsilon\}$ and
$L_{i+1} = \Sigma_c L_i \Sigma_r + L_i L_i$.

A \emph{$\Q$-weighted visibly pushdown automaton} on alphabet $\Sigma$
is a tuple $\mathcal{A}=(n,\valpha,\veta,\Gamma,M)$, where $n$ is the
number of \emph{states}, $\valpha$ is an $n$-dimensional
\emph{initial} (row) vector, $\veta$ is an $n$-dimensional
\emph{final} (column) vector, $\Gamma$ is a finite \emph{stack
  alphabet}, and $M= (M_c,M_r,M_{\mathit{int}})$ is a tuple of
\emph{matrix-valued transition functions} with types $M_c : \Sigma_c
\times \Gamma \to \Q^{n \times n}$, $M_r : \Sigma_r \times \Gamma \to
\Q^{n \times n}$ and $M_{\mathit{int}} : \Sigma_{\mathit{int}} \to
\Q^{n \times n}$.  If $a \in \Sigma_c$ and $\gamma \in \Gamma$ then
$M_c(a,\gamma)_{i,j}$ gives the weight of an $a$-labelled transition
from state $i$ to state $j$ that pushes $\gamma$ on the stack.  If $a
\in \Sigma_r$ and $\gamma \in \Gamma$ then $M_r(a,\gamma)_{i,j}$ gives
the weight of an $a$-labelled transition from state $i$ to $j$ that
pops $\gamma$ from the stack.

For each well-matched word $u \in \Sigma^*$ we define an $n \times n$
rational matrix $\Ms{\A}(u)$ whose $(i,j)$-th entry denotes the total
weight of all paths from state $i$ to state $j$ along input $u$.  The
definition of $\Ms{\A}(u)$ follows the inductive definition of
well-matched words.  The base cases are $\Ms{\A}(\varepsilon) = I$ and
$\Ms{\A}(a)_{i,j}= M_{\mathit{int}}(a)_{i,j}$.  The inductive cases are
\begin{eqnarray}
\Ms{\A}(uv) & = & \Ms{\A}(u) \cdot \Ms{\A}(v) \label{eq:indcase1} \\ 
\Ms{\A}(aub) & = & \sum_{\gamma \in \Gamma}
M_c(a,\gamma) \cdot \Ms{\A}(u) \cdot M_r(b,\gamma) \, ,
\label{eq:indcase2}
\end{eqnarray}
for $a \in \Sigma_c$, $b \in \Sigma_r$.

The weight assigned by $\A$ to a well-matched word $w$ is defined as
$\A(w):=\valpha \Ms{\A}(u) \veta$.  We say that two $\Q$-weighted
VPA $\A$ and $\B$ are \emph{equivalent} if for each well-matched word
$w$ we have $\A(w)=\B(w)$.

An \emph{arithmetic circuit} is a finite directed acyclic multigraph
whose vertices, called \emph{gates}, have indegree $0$ or $2$.
Vertices of indegree $0$ are called \emph{input gates} and are
labelled with a constant $0$ or $1$, or a variable from the set $\{x_i
: i \in \mathbb{N}\}$.  Vertices of indegree $2$ are called
\emph{internal gates} and are labelled with one of the arithmetic
operations $+$, $*$ or $-$.  We assume that there is a unique gate
with outdegree $0$ called the \emph{output}.  Note that $C$ is a
multigraph, so there can be two edges between a pair of gates, i.e.,
both inputs to a given gate can lead from the same source.  We call a
circuit \emph{variable-free} if all inputs gates are labelled $0$ or
$1$.

The \emph{Arithmetic Circuit Identity Testing (\textbf{ACIT})} problem
asks whether the output of a given circuit is equal to the zero
polynomial.  \textbf{ACIT} is known to be in \textbf{coRP} but it
remains open whether there is a polynomial or even sub-exponential
algorithm for this problem~\cite{ABKM09}.  Utilising the fact
that a variable-free arithmetic circuit of size $O(n)$ can compute
$2^{2^n}$, Allender \emph{et al.}~\cite{ABKM09} give a logspace
reduction of the general \textbf{ACIT} problem to the special case of
variable-free circuits.  Henceforth we assume without loss of
generality that all circuits are variable-free.  Furthermore we recall
that \textbf{ACIT} can be reformulated as the problem of deciding
whether two variable-free circuits using only the arithmetic
operations $+$ and $*$ compute the same number~\cite{ABKM09}.

We have the following proposition:
\begin{proposition} \label{prop:ACIT-to-VPA}
$\mathbf{ACIT}$ is logspace reducible to the equivalence problem for
  $\Q$-weighted visibly pushdown automata.
\end{proposition}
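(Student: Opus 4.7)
The plan is to appeal to the reformulation noted earlier and reduce the following problem to $\Q$-weighted VPA equivalence in logspace: given two variable-free circuits $C_1, C_2$ whose only gates are $+$ and $*$, decide whether $v_{C_1} = v_{C_2}$. The central idea is to associate with each such circuit $C$ a canonical well-matched word $w_C$ and a $\Q$-weighted VPA $\A_C$ such that $\A_C(w_C) = v_C$ and $\A_C(w) = 0$ for every $w \neq w_C$; the reduction then glues the two automata together over a common alphabet.

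To construct $\A_C$ I first encode $C$ as a weighted context-free grammar with two non-terminals per gate. For every gate $g$ introduce a ``weighted'' non-terminal $X_g$ and a ``padding'' non-terminal $Y_g$, together with a dedicated call/return pair $[_g, ]_g$; leaves share a single internal terminal $\ell$. The productions are engineered so that (i) every derivation of $X_g$ or of $Y_g$ yields the same well-matched word $w_g$, (ii) the total weight of derivations of $X_g$ equals $v_g$, and (iii) the total weight of derivations of $Y_g$ equals $1$. For a leaf of value $c$ take $X_g \to \ell$ of weight $c$ and $Y_g \to \ell$ of weight $1$. For a $*$-gate $g = g_1 \cdot g_2$ take the single rules $X_g \to [_g X_{g_1} X_{g_2} ]_g$ and $Y_g \to [_g Y_{g_1} Y_{g_2} ]_g$, each of weight $1$; the products in (ii) and (iii) then come out correctly. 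For a $+$-gate the two rules $X_g \to [_g X_{g_1} Y_{g_2} ]_g$ and $X_g \to [_g Y_{g_1} X_{g_2} ]_g$ (each of weight $1$), together with $Y_g \to [_g Y_{g_1} Y_{g_2} ]_g$, realise addition as a sum over the two derivation choices. A short induction on $C$ establishes (i)--(iii). Translating this grammar into a $\Q$-weighted VPA by the standard construction---states track the currently expanding non-terminal and the current position within the body of its rule, while the stack records the expansion context between matched calls and returns---produces $\A_C$.

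Given $C_1$ and $C_2$, I then obtain the two VPAs over a common alphabet as follows. Introduce a fresh internal separator $\#$ and set $w^* = w_{C_1}\,\#\,w_{C_2}$. Let $\A_1$ be the sequential composition of $\A_{C_1}$ with a deterministic reader chain that, after $w_{C_1}$, consumes $\#$ and then the symbols of $w_{C_2}$ one by one with weight $1$ (pushing and popping fixed stack symbols so that the call/return nesting of $w_{C_2}$ is respected); let $\A_2$ be the symmetric composition, prefixing a weight-$1$ reader chain for $w_{C_1}\#$ to $\A_{C_2}$. Both automata accept only $w^*$ and satisfy $\A_1(w^*) = v_{C_1}$ and $\A_2(w^*) = v_{C_2}$, so $\A_1 \equiv \A_2$ iff $v_{C_1} = v_{C_2}$. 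All objects are of size linear in $|C_1|+|C_2|$ and each transition can be computed from local information at a single gate, so the entire reduction runs in logspace.

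The step I expect to be the main obstacle is verifying the tightness property $\A_C(w) = 0$ for $w \neq w_C$, since VPA accepting runs can in principle arise from unintended combinations of stack pushes and pops. I will handle this by induction on $C$, exploiting the rigid coupling of the call/return tags $[_g, ]_g$ to the gate $g$ and the fact that the $X_g/Y_g$ productions are arranged so that every derivation of either non-terminal yields exactly $w_g$. An accepting run of the VPA on a word $w$ then bijects with such a derivation, forcing $w = w_C$ and reducing the final equivalence claim to a comparison of two scalars.
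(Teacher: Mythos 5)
Your grammar construction for a single circuit is sound and is close in spirit to the paper's: the paper also builds, per gate, one state and one stack symbol, encodes a $*$-gate by pushing the second operand and a $+$-gate by branching (with weight $1/2$ each, since it builds a probabilistic automaton), and arranges matters so that each automaton assigns nonzero weight to exactly one canonical word. Your $X_g/Y_g$ (value/padding) trick is a legitimate alternative to the paper's normalisation by $M_d$, and works fine because $\Q$-weights need not be sub-stochastic.

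The genuine gap is in the gluing step. You propose that $\A_1$ finish by running ``a deterministic reader chain that \ldots{} consumes \#\ and then the symbols of $w_{C_2}$ one by one with weight $1$'' (and symmetrically for $\A_2$). The canonical word $w_{C}$ is the tree-unfolding of the circuit DAG, and its length is in general exponential in $|C|$: for the repeated-squaring circuit $g_{i+1}=g_i*g_i$ one has $|w_{g_{i+1}}|=2|w_{g_i}|+2$, and indeed the whole point of variable-free circuits (as the paper notes) is that size-$O(n)$ circuits compute numbers like $2^{2^n}$. A state-per-symbol reader therefore has exponentially many states, so the reduction as written is neither logspace nor even polynomial, and the claim that ``all objects are of size linear in $|C_1|+|C_2|$'' fails. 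The fix is available inside your own construction: instead of a literal chain, use the padding automaton of the other circuit, i.e.\ let $\A_1$ run the $X$-automaton of $C_1$, then the internal symbol $\#$, then the $Y$-automaton of $C_2$ (which accepts exactly $w_{C_2}$ with weight $1$ and has size linear in $|C_2|$), and let $\A_2$ be the mirror image; the $\#$-transition can be realised by the block off-diagonal matrix $\etas{1}\alphas{2}$, and everything is again logspace-computable. For comparison, the paper avoids the gluing problem altogether by first normalising both circuits to a layered form of the same depth $d$ (alternating $+$ and $*$ levels, all inputs at depth $d$), so that both automata have the \emph{same} canonical word $w_d$ and can be compared directly, at the cost of a common scaling factor $1/M_d$ coming from the $1/2$--$1/2$ encoding of addition.
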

\begin{proof}
Let $C$ and $C'$ be two circuits over basis $\{+,*\}$.  Without loss
of generality we assume that in each circuit the inputs of a
depth-$i$ gate both have depth $i+1$, $+$-nodes have even depth,
$*$-nodes have odd depth, and input nodes all have the same depth $d$.
Notice that in either circuit any path from an input gate to an
output gate has length $d$.

We define two automata $\mathcal{A}$ and $\mathcal{A}'$ that are
equivalent if and only if $C$ and $C'$ have the same output.  Both
automata are defined over the alphabet $\{c,r,\iota\}$, with $c$ a
call, $r$ a return and $\iota$ an internal event.  We explain how
$\mathcal{A}$ arises from $C$; the definition of $\mathcal{A}'$ is
entirely analogous.

Suppose that $C$ has set of gates $\{g_0,g_1,\ldots,g_n\}$, with $g_0$
the output gate.  For each gate $g_i$ of $C$ we include a state $s_i$
of $\mathcal{A}$ and a stack symbol $\gamma_i$.  The initial state of
$\mathcal{A}$ is $s_0$, and all states are accepting.  The transitions
of $\mathcal{A}$ are defined as follows:

\begin{iteMize}{$\bullet$}
\item For each $+$-gate $g_i:=g_j+g_k$ in $C$ we include an internal
  transition from $s_i$ that goes to $s_j$ with probability $1/2$ and
  to $s_k$ with probability $1/2$.
\item For each $*$-gate $g_i := g_j * g_k$ we include a
  probability-$1$ call transition from $s_i$ to $s_j$ that pushes
  $\gamma_k$ onto the stack.
\item
An input gate $g_i$ with label $0$ contributes no transitions.
\item For each input gate $g_i$ with label $1$ and each stack symbol
  $\gamma_j$, we include a return transition from $s_i$ that pops
  $\gamma_j$ off the stack and ends in state $s_j$ with probability
  $1$.
\end{iteMize}
Recall that acceptance is by empty stack and final state.  By
construction $\mathcal{A}$ only accepts a single word, as we now
explain.  Define a sequence of words $w_n \in \{c,r,\iota\}^*$ by $w_0
= \iota$, $w_{n+1} = \iota w_n$ for $n$ even, and
$w_{n+1} =  c w_n r w_n$ for $n$ odd.
Furthermore, write $M_0 = 1$, $M_{n+1} = 2M_n$ for $n$ even, and
$M_{n+1} = M_n^2$ for $n$ odd.  Then $\mathcal{A}$ accepts $w_d$ with
probability $N/M_d$, where $d$ is the depth of the circuit $C$ and $N$
is output of $C$.  All other words are accepted with probability $0$.
We conclude that $C$ and $C'$ have the same value if and only if
$\mathcal{A}$ and $\mathcal{A}'$ are equivalent.
\end{proof}

In the remainder of this section we give a converse reduction: from
equivalence of $\Q$-weighted VPA to \textbf{ACIT}.  The following
result gives a decision procedure for the equivalence of two
$\Q$-weighted VPA $\A$ and $\B$.

\begin{proposition}
$\A$ is equivalent to $\B$ if and only if $\A(w)=\B(w)$ for all
words $w \in L_{n^2}$, where $n$ is the sum of the number of states
of $\A$ and the number of states of $\B$.
\label{prop:equiv}
\end{proposition}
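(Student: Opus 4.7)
The plan is first to reduce the statement to a zeroness question via a standard difference-automaton construction, and then to carry out a Sch\"utzenberger-style dimension argument in the matrix algebra $\Q^{n\times n}$.  For the reduction, I would assemble $\A$ and~$\B$ into a single $\Q$-weighted VPA~$\mathcal{D}$ with $n = \ns{\A} + \ns{\B}$ states and stack alphabet $\Gammas{\A} \sqcup \Gammas{\B}$, whose transition matrices are block-diagonal sums of the corresponding matrices of $\A$ and~$\B$, with initial row vector $(\alphas{\A}, -\alphas{\B})$ and final column vector stacking $\etas{\A}$ above $\etas{\B}$.  A routine induction using~(\ref{eq:indcase1}) and~(\ref{eq:indcase2}) then gives $\mathcal{D}(w) = \A(w) - \B(w)$ for every well-matched~$w$, so the desired claim reduces to: if a $\Q$-weighted VPA $\mathcal{D}$ with $n$ states is zero on all of~$L_{n^2}$, then $\mathcal{D}$ is zero on every well-matched word.

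For this reduction I would introduce the nested family of subspaces $V_i := \Span\{\Ms{\mathcal{D}}(u) : u \in L_i\} \subseteq \Q^{n\times n}$.  Since $\varepsilon \in L_0$, an easy induction gives $L_i \subseteq L_i L_i \subseteq L_{i+1}$, so $V_0 \subseteq V_1 \subseteq \cdots$ is monotone and each $V_i$ has dimension at most~$n^2$.  The heart of the argument is the closure lemma: \emph{if $V_i = V_{i+1}$ then $V_j = V_i$ for all $j \geq i$.}  Using $L_{i+2} = \Sigma_c L_{i+1}\Sigma_r + L_{i+1}L_{i+1}$ together with~(\ref{eq:indcase1}) and~(\ref{eq:indcase2}), it suffices to show $V_{i+2}\subseteq V_{i+1}$: every push/pop generator $\Ms{\mathcal{D}}(avb) = \sum_\gamma M_c(a,\gamma)\Ms{\mathcal{D}}(v)M_r(b,\gamma)$ with $v\in L_{i+1}$ can be rewritten via $\Ms{\mathcal{D}}(v) \in V_{i+1} = V_i$ as a $\Q$-linear combination of matrices $\Ms{\mathcal{D}}(av'b)$ with $av'b \in \Sigma_c L_i\Sigma_r \subseteq L_{i+1}$, and every concatenation generator $\Ms{\mathcal{D}}(u_1)\Ms{\mathcal{D}}(u_2)$ with $u_1,u_2\in L_{i+1}$ expands similarly into matrices $\Ms{\mathcal{D}}(v_1 v_2)$ with $v_1 v_2 \in L_i L_i \subseteq L_{i+1}$.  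In both cases the result already lies in~$V_{i+1}$.

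Because $(V_i)$ can strictly ascend at most $n^2$ times, the chain is stable from index~$n^2$ onward: $V_j = V_{n^2}$ for every $j \geq n^2$.  Consequently, for an arbitrary well-matched word~$w$ (say $w\in L_j$ with $j\geq n^2$) we can write $\Ms{\mathcal{D}}(w) = \sum_k c_k \Ms{\mathcal{D}}(u_k)$ with $u_k \in L_{n^2}$ and $c_k \in \Q$, and multiplying by $\valpha$ and~$\veta$ yields $\mathcal{D}(w) = \sum_k c_k \mathcal{D}(u_k) = 0$ by hypothesis.  The main obstacle is the closure lemma; once one recognises that the generators of $V_{i+2}$ are built from those of $V_{i+1}$ via the two bilinear operations $(M,N)\mapsto MN$ and $M\mapsto \sum_\gamma M_c(a,\gamma)\,M\,M_r(b,\gamma)$, the proof reduces to a short calculation from the inductive definitions of $L_i$ and~$\Ms{\mathcal{D}}$.
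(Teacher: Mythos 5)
Your proposal is correct and follows essentially the same route as the paper: combine $\A$ and $\B$ into one $n$-state block structure (the paper keeps the sign in the combined initial/final vectors rather than forming a difference automaton $\mathcal{D}$, a purely cosmetic difference), and then argue that the ascending chain of spans $\Span\{\Ms{}(u): u \in L_i\} \subseteq \Q^{n\times n}$ stabilises within $n^2$ steps. The only divergence is that you spell out the closure lemma (if the chain is stationary at step $i$ it remains so, via the two bilinear operations coming from~(\ref{eq:indcase1}) and~(\ref{eq:indcase2})), which the paper subsumes under its brief ``dimension argument''; this is a welcome filling-in of detail, not a different proof.
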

\begin{proof}
Recall that for each balanced word $u
\in \Sigma^*$ we have rational matrices $\Ms{\A}(u)$ and $\Ms{\B}(u)$
giving the respective state-to-state transition weights of $\A$
and $\B$ on reading $u$.  These two families of matrices can be
combined into a single family

\vspace{-4mm}
\[ \mathcal{M} = \left \{ \left(
                 \begin{array}{cc}
                  \Ms{\A}(u) & \mathbf{0}\\
                 \mathbf{0} & \Ms{\B}(u)
                 \end{array}\right)
: \mbox{$u$ well-matched} \right\}\] of $n \times n$ matrices.  Let us
                 also write $\mathcal{M}_i$ for the subset of
                 $\mathcal{M}$ generated by those well-matched words
                 $u \in L_i$.

Let $\alphas{\A},\etas{\A}$ and $\alphas{\B},\etas{\B}$ be the
respective initial and final-state vectors of $\A$ and $\B$.  Then
$\A$ is equivalent to $\B$ if and only if
\begin{gather}
( \begin{array}{cc} \alphas{\A} & \alphas{\B}
\end{array} ) M \left( \begin{array}{c}
                \etas{\A} \\ -\etas{\B}
\end{array} \right) = 0
\label{eq:zero}
\end{gather}
for all $M \in \mathcal{M}$.  It follows that $\A$ is equivalent to
$\B$ if and only if (\ref{eq:zero}) holds for all $M$ in
$\mathrm{span}(\mathcal{M})$, where the span is taken in the rational
vector space of $n \times n$ rational matrices.  But
$\mathrm{span}(\mathcal{M}_i)$ is an ascending sequence of vector spaces:
\[ \mathrm{Span}(\mathcal{M}_0) \subseteq
   \mathrm{Span}(\mathcal{M}_1) \subseteq
   \mathrm{Span}(\mathcal{M}_2) \subseteq \ldots
\]
It follows from a dimension argument that this sequence stops in at
most $n^2$ steps and we conclude that $\mathrm{span}(\mathcal{M})=
\mathrm{span}(\mathcal{M}_{n^2})$.
\end{proof}

\begin{proposition}
  Given a $\Q$-weighted visibly pushdown automaton $\A$ and \mbox{$n
    \in\mathbb{N}$} one can compute in logarithmic space a circuit
  that represents $\sum_{w \in L_{n^2}} \A(w)$.
\label{prop:aut2circuit}
\end{proposition}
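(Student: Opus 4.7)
The plan is to implement the inductive definition of $L_{n^2}$ directly, building sub-circuits that compute each entry of a matrix $T_i \in \Q^{n \times n}$ standing for $\sum_{w \in L_i} \Ms{\A}(w)$ (interpreted with multiplicity arising from the grammar). The output gate will then compute $\valpha T_{n^2} \veta$, giving the desired sum. Matching the clauses $L_0 = \Sigma_{\mathit{int}} \cup \{\varepsilon\}$ and $L_{i+1} = \Sigma_c L_i \Sigma_r \cup L_i L_i$ with the inductive equations~(\ref{eq:indcase1}) and~(\ref{eq:indcase2}) for $\Ms{\A}$, these matrices satisfy
\[
  T_0 \;=\; I + \sum_{a \in \Sigma_{\mathit{int}}} M_{\mathit{int}}(a),
  \qquad
  T_{i+1} \;=\; T_i \cdot T_i \;+\; \sum_{\substack{a \in \Sigma_c,\, b \in \Sigma_r\\ \gamma \in \Gamma}} M_c(a,\gamma)\, T_i\, M_r(b,\gamma).
\]

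Next I would translate these identities into circuit form. For each $0 \le i \le n^2$ and each position $(j,k)$ with $1\le j,k\le n$, I introduce an arithmetic gate $g_{i,j,k}$ computing the entry $(T_i)_{j,k}$. The base case is realised by a constant-size gadget over the rational constants $(M_{\mathit{int}}(a))_{j,k}$ and $I_{j,k}$; the inductive step is a standard gadget implementing the matrix equation above using $+$- and $*$-gates combined into iterated-addition sub-circuits, sharing the gates $g_{i,\cdot,\cdot}$ across all occurrences of $T_i$ on the right-hand side. Since each rational constant appearing in the input can be expressed by a constant-size sub-circuit over $\{0,1,+,*,-\}$, the total number of gates is polynomial in $n$, $|\Sigma|$ and $|\Gamma|$. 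A final gadget computes $\valpha T_{n^2}\veta$ at the output gate.

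To execute the construction in logarithmic space I would label every gate by a tuple of the form $(\ell, i, j, k, a, b, \gamma)$, where $\ell$ is a constant-size tag identifying the gate's role (an entry of $T_i$, an inner partial sum, an intermediate matrix product, a coefficient constant, etc.), $0\le i\le n^2$, $1\le j,k\le n$, and the remaining fields range over $\Sigma_c$, $\Sigma_r$, $\Gamma$ where relevant. Each such name fits in $O(\log(n|\Sigma||\Gamma|))$ bits. A logspace transducer then enumerates all tuples in lexicographic order and, for each one, outputs the corresponding gate's operation together with the names of its two predecessors; those predecessors are fully determined by the fixed pattern associated with~$\ell$ and the numerical indices, so no storage beyond the current tuple and a few counters is ever needed.

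The main obstacle is to prevent the na\"ive unfolding of the recursion for $T_{i+1}$ from producing a circuit of size $2^{\Theta(n^2)}$: the right-hand side refers to $T_i$ several times, and each reference must be an edge to the previously-named gate $g_{i,j,k}$ rather than to a fresh copy. The naming scheme above makes this \emph{sharing} automatic, because every entry of $T_i$ carries a single canonical index regardless of where it is consumed. This is the only delicate point; once sharing is secured, both the polynomial size bound and the logspace bound follow directly from the fact that matrix multiplication and iterated addition admit straightforward fixed-shape circuit gadgets.
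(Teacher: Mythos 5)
Your proposal is essentially the paper's own proof: the paper states exactly the recurrence for $\sum_{w \in L_{i+1}} \Ms{\A}(w)$ in terms of $\sum_{w \in L_i} \Ms{\A}(w)$ (your $T_{i+1}$ in terms of $T_i$), asserts that this yields a logspace-computable circuit, and concludes by pre- and post-multiplying with $\valpha$ and $\veta$; your canonical gate-naming scheme merely makes explicit the sharing and the logspace enumeration that the paper leaves implicit, and your multiplicity caveat is consistent with the paper's reading of the sum. The only slight inaccuracy—one the paper also glosses over—is the claim that each rational weight has a constant-size sub-circuit over $\{0,1,+,*,-\}$: such variable-free circuits compute integers (and a $k$-bit integer needs size $O(k)$, not $O(1)$), so strictly one should clear denominators or carry numerator/denominator pairs, which does not affect the argument.
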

\begin{proof}
From the definition of the language $L_i$ and the family of
matrices $\Ms{\A}$ we have:
\begin{eqnarray*}
\sum_{w \in L_{i+1}} \Ms{\A}(w) & = & \sum_{a \in \Sigma_c}
                                    \sum_{b \in \Sigma_r}
                                    \sum_{\gamma \in \Gamma}
 \Ms{\A}(a,\gamma) \left( \sum_{u \in L_i} \Ms{\A}(u) \right) \Ms{\A}(b,\gamma)\\
 & & + \left(\sum_{u \in L_i} \Ms{\A}(u)\right)
       \left(\sum_{u \in L_i} \Ms{\A}(u)\right) \, .
\end{eqnarray*}
The above equation implies that we can compute in logarithmic space a
circuit that represents $\sum_{w \in L_n} \Ms{\A}(w)$.  The result of
the proposition immediately follows by premultiplying by the initial
state vector and postmultiplying by the final state vector.
\end{proof}

A key property of $\Q$-weighted VPA is their closure under product.
\begin{proposition} \label{prop:prod}
Given $\Q$-weighted VPA $\A$ and $\B$ on the same alphabet $\Sigma$ one can
define a \emph{synchronous-product automaton}, denoted $\A \otimes \B$,
such that $(\A \otimes \B)(w) = \A(w) \B(w)$ for all $w \in \Sigma^*$.
\end{proposition}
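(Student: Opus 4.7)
The plan is to construct $\A \otimes \B$ via a Kronecker product construction and verify by induction that the matrix assigned to each well-matched word is the Kronecker product of the corresponding matrices of $\A$ and~$\B$; Proposition~\ref{prop:kronecker} will do almost all the work.

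Concretely, given $\A = (n_\A, \valpha_\A, \veta_\A, \Gamma_\A, M^\A)$ and $\B = (n_\B, \valpha_\B, \veta_\B, \Gamma_\B, M^\B)$, I would define $\A \otimes \B$ to have $n_\A n_\B$ states, stack alphabet $\Gamma_\A \times \Gamma_\B$, initial vector $\valpha_\A \otimes \valpha_\B$, final vector $\veta_\A \otimes \veta_\B$, and transition functions
\begin{align*}
M^{\A \otimes \B}_{\mathit{int}}(a) &:= M^\A_{\mathit{int}}(a) \otimes M^\B_{\mathit{int}}(a), \\
M^{\A \otimes \B}_c(a, (\gamma, \delta)) &:= M^\A_c(a, \gamma) \otimes M^\B_c(a, \delta), \\
M^{\A \otimes \B}_r(b, (\gamma, \delta)) &:= M^\A_r(b, \gamma) \otimes M^\B_r(b, \delta).
\end{align*}

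The core claim I would prove by induction on the structure of well-matched words is that $\Ms{\A \otimes \B}(w) = \Ms{\A}(w) \otimes \Ms{\B}(w)$. The base cases $w = \varepsilon$ and $w \in \Sigma_{\mathit{int}}$ are immediate from the definitions and the identity $I = I \otimes I$. For the concatenation case~\eqref{eq:indcase1}, applying the inductive hypothesis and Proposition~\ref{prop:kronecker} gives
\[
\Ms{\A \otimes \B}(uv) = (\Ms{\A}(u) \otimes \Ms{\B}(u))(\Ms{\A}(v) \otimes \Ms{\B}(v)) = \Ms{\A}(uv) \otimes \Ms{\B}(uv).
\]
The call/return case~\eqref{eq:indcase2} is the step requiring most care: one must observe that summing over $(\gamma,\delta) \in \Gamma_\A \times \Gamma_\B$ factors as a product of independent sums, so that
\[
\sum_{\gamma,\delta} \bigl(M^\A_c(a,\gamma) \Ms{\A}(u) M^\A_r(b,\gamma)\bigr) \otimes \bigl(M^\B_c(a,\delta) \Ms{\B}(u) M^\B_r(b,\delta)\bigr)
\]
equals $\Ms{\A}(aub) \otimes \Ms{\B}(aub)$ by bilinearity of $\otimes$ together with Proposition~\ref{prop:kronecker}; this is the one spot where one must be a little careful to keep the two stack components independent, which is precisely what motivates taking the stack alphabet to be the Cartesian product $\Gamma_\A \times \Gamma_\B$ rather than a disjoint union.

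Finally, using Proposition~\ref{prop:kronecker} once more on $1 \times n$ and $n \times 1$ vectors,
\[
(\A \otimes \B)(w) = (\valpha_\A \otimes \valpha_\B)\bigl(\Ms{\A}(w) \otimes \Ms{\B}(w)\bigr)(\veta_\A \otimes \veta_\B) = \bigl(\valpha_\A \Ms{\A}(w) \veta_\A\bigr)\bigl(\valpha_\B \Ms{\B}(w) \veta_\B\bigr) = \A(w)\B(w),
\]
which is the desired identity. I do not anticipate any real obstacle; the only subtlety is ensuring that the inductive case for matched call/return correctly separates into independent sums over the two stack alphabets, and this is handled exactly by the product stack-alphabet construction.
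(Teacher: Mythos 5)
Your construction is exactly the one in the paper: a Kronecker-product automaton with state set of size $n_\A n_\B$, stack alphabet $\Gamma_\A \times \Gamma_\B$, initial and final vectors $\valpha_\A \otimes \valpha_\B$ and $\veta_\A \otimes \veta_\B$, and transition matrices given by Kronecker products, followed by an induction on well-matched words using the mixed-product property (Proposition~\ref{prop:kronecker}). The argument is correct, and the extra care you take in the call/return case (factoring the sum over $\Gamma_\A \times \Gamma_\B$ by bilinearity of $\otimes$) just makes explicit a step the paper leaves as ``straightforward.''
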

\begin{proof}
The proof exploits the fact that the stack height is
determined by the input word, so the respective stacks of $\A$ and
$\B$ operating in parallel can be simulated in a single stack.

Let $\A=(\ns{\A},\Sigma,\Gammas{\A},\Ms{\A},\alphas{\A},\etas{\A})$
and $\B=(\ns{\B},\Sigma,\Gammas{\A},\Ms{\B},\alphas{\B},\etas{\B})$.
We define a product automaton $\C$.  Note that since the stack height
is determined by the input word we can simulate the respective stacks
of $\A$ and $\B$ using a single stack in $\C$ whose alphabet is the
product of the respective stack alphabets of $\A$ and $\B$.

The number of states of $\C$ is $\ns{\A} \cdot \ns{\B}$.  The initial
vector $\alphas{\C}$ in the vector $\alphas{\A} \otimes \alphas{\B}$
and the final vector $\etas{\C}$ is $\etas{\A} \otimes \etas{\B}$.
The stack alphabet of $\C$ is $\Gammas{\A} \times \Gammas{\B}$.  Given
$a \in \Sigma_c \cup \Sigma_r$ the transition matrix
$\Ms{\C}(a,(\gamma,\gamma'))$ is $\Ms{\A}(a,\gamma) \otimes
\Ms{\B}(a,\gamma')$.  Likewise, given $a\in \Sigma_{\mathit{int}}$ the
transition matrix $\Ms{\C}(a)$ is $\Ms{\A}(a)\otimes \Ms{\C}(a)$.

It is now straightforward to show that $\Ms{\C}(w)=\Ms{\A}(w)\otimes
\Ms{\B}(w)$ for all balanced words $w\in \Sigma^*$.  The proof
proceeds by induction on balanced words, following (\ref{eq:indcase1})
and (\ref{eq:indcase2}), and using Proposition~\ref{prop:kronecker} on
Kronecker products.
\end{proof}

\begin{proposition}
The equivalence problem for $\Q$-weighted visibly pushdown automata is
logspace reducible to \textbf{ACIT}.
\label{prop:red}
\end{proposition}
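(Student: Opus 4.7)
The plan is to combine Propositions~\ref{prop:equiv}, \ref{prop:aut2circuit}, and~\ref{prop:prod} in a small pipeline that turns an equivalence query about two $\Q$-weighted VPA $\A$ and $\B$ into a variable-free arithmetic circuit whose value is~$0$ iff $\A$ and $\B$ are equivalent. Since \textbf{ACIT} on variable-free circuits is the canonical form used in the excerpt, this gives the desired logspace reduction.

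First I would form a \emph{difference automaton} $\mathcal{D}$ with $\mathcal{D}(w)=\A(w)-\B(w)$ for every well-matched word~$w$. This is the obvious VPA analogue of the block-diagonal construction used for finite $\Q$-weighted automata at the start of Section~\ref{sec:rand}: take the disjoint union of the state spaces, the disjoint union of the stack alphabets, block-diagonal call/return/internal matrices, initial vector $(\alphas{\A},-\alphas{\B})$, and final vector $(\etas{\A},\etas{\B})^T$. Clearly $\mathcal{D}$ can be produced in logspace and has $n:=\ns{\A}+\ns{\B}$ states. By Proposition~\ref{prop:equiv}, $\A$ and $\B$ are equivalent iff $\mathcal{D}(w)=0$ for every $w \in L_{n^2}$.

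Next, since $\mathcal{D}(w)\in\Q$ may be negative, I cannot directly take $\sum_{w\in L_{n^2}}\mathcal{D}(w)$ as a zeroness test: cancellation could hide inequivalence. I sidestep this by squaring: apply Proposition~\ref{prop:prod} to $\mathcal{D}$ with itself to obtain $\mathcal{D}\otimes\mathcal{D}$, a $\Q$-weighted VPA with $(\mathcal{D}\otimes\mathcal{D})(w)=\mathcal{D}(w)^2\ge 0$ for every well-matched~$w$. Then Proposition~\ref{prop:aut2circuit} yields, in logspace, a variable-free arithmetic circuit $C$ representing
\[
  S \;=\; \sum_{w \in L_{n^2}} (\mathcal{D}\otimes\mathcal{D})(w) \;=\; \sum_{w \in L_{n^2}} \mathcal{D}(w)^2 \, .
\]
Because every summand is a non-negative rational, $S=0$ iff $\mathcal{D}(w)=0$ for all $w\in L_{n^2}$, iff $\A\equiv\B$ by Proposition~\ref{prop:equiv}. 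Testing whether $C$ evaluates to $0$ is precisely an instance of \textbf{ACIT} on a variable-free circuit, so outputting~$C$ completes the many-one reduction.

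The composition of the three steps is logspace because each individual step is: the difference automaton is a direct textual rearrangement, the product automaton of Proposition~\ref{prop:prod} is defined by Kronecker products of the given transition matrices, and the summation circuit of Proposition~\ref{prop:aut2circuit} is already claimed to be logspace-computable. The only place I expect to need care is the bookkeeping for the parameter $n$: after squaring the automaton, its state count is $n^2$, but Proposition~\ref{prop:equiv} was applied before squaring, so I must invoke Proposition~\ref{prop:aut2circuit} with the pre-squaring bound $n^2$ on the level of the language $L_{n^2}$, not a bound derived from $\mathcal{D}\otimes\mathcal{D}$. This subtlety is harmless once spelled out, but it is the one place the proof must be explicit to be correct.
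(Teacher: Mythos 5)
Your proof is correct and rests on the same three pillars as the paper's own argument: the $L_{n^2}$ bound of Proposition~\ref{prop:equiv}, closure under Kronecker product (Proposition~\ref{prop:prod}), the logspace circuit construction of Proposition~\ref{prop:aut2circuit}, and the sum-of-squares trick to rule out cancellation. The only real difference is packaging. The paper never constructs a difference VPA: it expands $\sum_{w}(\A(w)-\B(w))^2 = \sum_w \A(w)^2+\B(w)^2-2\A(w)\B(w)$ and builds circuits for the three product automata $\A\otimes\A$, $\B\otimes\B$ and $\A\otimes\B$, arriving at the two-circuit ($\{+,*\}$) formulation of \textbf{ACIT}; you instead build the block-diagonal difference automaton $\mathcal{D}$ (essentially the family of matrices already used in the proof of Proposition~\ref{prop:equiv}) and square it once via $\mathcal{D}\otimes\mathcal{D}$, reducing to zero-testing a single variable-free circuit, which is also \textbf{ACIT} as defined in the paper. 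Your route buys a single product and a single circuit at the cost of one extra verification the paper avoids, namely that the block-diagonal construction with disjoint stack alphabets satisfies $\mathcal{D}(w)=\A(w)-\B(w)$ on all well-matched words; this does go through by induction following (\ref{eq:indcase1}) and (\ref{eq:indcase2}), and you rightly flag that Proposition~\ref{prop:aut2circuit} must be invoked with the pre-squaring parameter $n=\ns{\B}+\ns{\C}$ rather than a bound derived from the squared automaton. So the proposal is sound and, modulo this repackaging, matches the paper's proof.
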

\begin{proof}
Let $\A$ and $\B$ be $\Q$-weighted visibly pushdown automata with a
total of $n$ states between them.  Then
\begin{eqnarray*}
 \sum_{w \in L_n} (\A(w)-\B(w))^2 & = &
 \sum_{w \in L_n} \A(w)^2 + \B(w)^2 - 2\A(w)\B(w) \\
& = &
\sum_{w \in L_n} (\A \otimes \A)(w) + (\B \otimes \B)(w) - 2(\A \otimes \B)(w)
\end{eqnarray*}
Thus $\A$ is equivalent to $\B$ iff \mbox{$\sum_{w \in L_n} (\A \otimes
  \A)(w)+ (\B \otimes \B)(w) = 2\sum_{w \in L_n} (\A \otimes \B)(w)$}.
But Propositions \ref{prop:aut2circuit} and~\ref{prop:prod} allow us
to translate the above equation into an instance of \textbf{ACIT}.
\end{proof}

The trick of considering sums-of-squares of acceptance weights in the
above proof is inspired by~\cite[Lemma 1]{Tzeng96}.

\section{Conclusion}
It is known that deciding equivalence of $\Q$-weighted finite automata is in
\textbf{NC}~\cite{Tzeng96}.  We have shown that deciding minimality is
also in \textbf{NC}.  Regarding the corresponding function problems,
we have given an \textbf{RNC} algorithm to decide equivalence and
output a counterexample word in case the input automata differ, and an
\textbf{RNC} algorithm to minimise an automaton.  We do not know
whether either of these problems is in \textbf{NC}.  It would be
interesting to explore whether there is a relationship between these
two problems, and to relate them to other problems in \textbf{RNC}
that are not known to be in \textbf{NC}, such as bipartite matching.

For $\Q$-weighted VPA the situation is more complete.  We have shown
that deciding equivalence is equivalent to polynomial identity
testing, the complexity of which is an important open problem.

\bibliographystyle{plain} 
\bibliography{db}

\end{document}